\newcommand{\erf}{\operatorname{erf}}
\renewcommand{\Re}{\operatorname{Re}}
\renewcommand{\Im}{\operatorname{Im}}
\newcommand{\DN}{{\hspace{-0.05cm}\mathsmaller{D},\mathsmaller{N}}}
\newcommand{\D}{{\hspace{-0.05cm}\mathsmaller{D}}}
\newcommand{\N}{{\hspace{-0.05cm}\mathsmaller{N}}}
\newcommand{\vect}[2]{(\begin{smallmatrix} #1 \\ #2 \end{smallmatrix})}
\newcommand{\vvect}[2]{\big(\begin{smallmatrix} #1 \\ #2 \end{smallmatrix}\big)}
\newtheorem{satz}{Theorem}[section]
\newtheorem{lem}[satz]{Lemma}
\newtheorem{prop}[satz]{Proposition}
\newtheorem{cor}[satz]{Corollary}
\newtheorem{defi}[satz]{Definition}
\newtheorem{bem}[satz]{Remark}
\numberwithin{equation}{section}
\begin{document}

\title[]{Infinite order differential operators
associated\\ with superoscillations in the half-plane barrier}
\author{Peter Schlosser}
\address{(PS) Dipartimento di Matematica, Politecnico Milano, Via E. Bonardi 9, Milano}
\email{pschlosser@math.tugraz.at}
\thanks{This research was funded by the Austrian Science Fund (FWF) under Grant No. J 4685-N and by the European Union -- NextGenerationEU}
\subjclass[2020]{35A20, 35A08}
\keywords{Superoscillations, Schrödinger equation, Green's function, Half-plane barrier}

\begin{abstract}
Superoscillations are a phenomenon in physics, where linear combinations of low-frequency plane waves interfere almost destructively in such a way that the resulting wave has a higher frequency than any of the individual waves. 
The evolution of superoscillatory initial datum under the time dependent Schrödinger equation is stable in free space, but in general it is unclear whether it can be preserved in the presence of an external potential. 
In this paper, we consider the two-dimensional problem of superoscillations interacting with a half-plane barrier, where homogenous Dirichlet or Neumann boundary conditions are imposed on the negative $x_2$-semiaxis. We use the Fresnel integral technique to write the wave function as an absolute convergent Green's function integral. Moreover, we introduce the propagator of the Schrödinger equation in form of an infinite order differential operator, acting continuously 
on the function space of exponentially bounded entire functions.
In particular, this operator allows to prove that the property of superoscillations is preserved in the form of a similar phenomenon called supershift, which is stable over time.
\end{abstract}

\maketitle

\section{Introduction}

The concept of superoscillations was first introduced in the context of antenna theory in the 1950s, see the paper \cite{T52}. However, it was in the 1990s that Y. Aharonov and his collaborators discovered the connection between superoscillations and quantum mechanics, specifically weak values, see  \cite{AAV88,APR91,AV90}, but also the later publications \cite{AB05,ACNSST12,AGJR16,APR21,APT10} dealing with several developments of the theory of superoscillations.

\medskip

A mathematical investigation of a quantum mechanical superoscillating wave or particle always reduces to the time dependent Schrödinger equation with some potential $V$ and a superoscillating function $F$ as initial condition:
\begin{align*}
i\frac{\partial}{\partial t}\Psi(t,\mathbf{x})&=\big(-\Delta+V(t,\mathbf{x})\big)\Psi(t,\mathbf{x}), && t>0,\,\mathbf{x}\in\Omega, \\
\Psi(0,\mathbf{x})&=F(\mathbf{x}), && \mathbf{x}\in\Omega.
\end{align*}
The question whether the solution $\Psi(t,\mathbf{x})$ is again superoscillating at times $t>0$ was first proven for free particles in \cite{ACSST13_1,ACSST13_2}, and later also for nonvanishing potentials as the harmonic oscillator in \cite{ACSS20,ACSS18,BCSS14,BS15}, the electric field in \cite{ACSST17,ACST18,ACSS18,BCSS14}, the magnetic field in \cite{ACST18,CGS17,CPSW23}, the centrifugal potential in \cite{ACST18,CGS19}, the step potential in \cite{ACSST20} and distributional potentials as $\delta$ and $\delta'$ in \cite{ABCS20,ABCS21}. A unified approach to those problems was given in \cite{ABCS22,S22}, where under certain assumptions on the corresponding Green's function, the time persistence property of superoscillations was investigated for whole classes of potentials. Another general approach was given in \cite{PW22} who provide conditions on the moments of the Green's function in order to obtain similar time persistence results.

\medskip

A shared characteristic among the aforementioned examples is that they solely focus on potentials within a single spatial dimension.
 There are very few publications which treat the time persistence problem of the Schrödinger equation in two or more dimensions; some of them are \cite{ACJSSST22,ACSST16,ACSST17,ACST18,CGS17}. In this paper we consider the two-dimensional half-plane barrier with Dirichlet (or Neumann) boundary conditions. In particular, we use the setting where the barrier is located on the negative $x_2$-semiaxis $\Gamma:=\{\vect{0}{x_2}\;|\;x_2\leq 0\}$, i.e., we consider the Schrödinger equation on $\Omega:=\mathbb{R}^2\setminus\Gamma$, namely

\medskip
\begin{flushleft}
\begin{minipage}{0.29\textwidth}
\begin{center}
\begin{tikzpicture}
\fill[black!30] (-1,-1)--(1,-1)--(1,1)--(-1,1);
\draw[thick,fill=white] (-0.05,-1)--(-0.05,0) arc (180:0:0.05)--(0.05,-1);
\draw[->] (0.05,0)--(1.3,0) node[anchor=south] {\small{$x_1$}};
\draw[->] (0,0.05)--(0,1.35) node[anchor=north west] {\small{$x_2$}};
\draw (-0.5,0.5) node[anchor=center] {\Large{$\Omega$}};
\draw (0,-0.5) node[anchor=west] {\large{$\Gamma$}};
\end{tikzpicture}
\end{center}
\end{minipage}
\begin{minipage}{0.7\textwidth}
\begin{align}
i\frac{\partial}{\partial t}\Psi_\DN(t,\mathbf{x})&=-\Delta\Psi_\DN(t,\mathbf{x}), && t>0,\,\mathbf{x}\in\Omega, \notag \\
\Psi_\D(t,\mathbf{x})=0\quad\Big(\text{or}&\;\;\frac{\partial\Psi_\N}{\partial x_1}(t,\mathbf{x})=0\Big), && t>0,\,\mathbf{x}\in\Gamma, \label{Eq_Schroedinger_equation} \\
\Psi_\DN(0,\mathbf{x})&=F(\mathbf{x}), && \mathbf{x}\in\Omega. \notag
\end{align}
\end{minipage}
\end{flushleft}

\medskip

\noindent Here and in the rest of the paper, the indices $D$ and $N$ always emphasize the particular boundary condition used (Dirichlet or Neumann).

\medskip

The strategy to solve the Schrödinger equation of the half-plane barrier is based on Green's functions techniques, i.e., we write the wave function $\Psi_\DN$ as an integral of the form
\begin{equation}\label{Eq_Psi_formal}
\Psi_\DN(t,\mathbf{x})=\int_\Omega G_\DN(t,\mathbf{x},\mathbf{y})F(\mathbf{y})d\mathbf{y},\quad t>0,\mathbf{x}\in\Omega.
\end{equation}
A particular class of initial conditions are entire functions in two complex variables which grow at most exponentially at $\infty$. More precisely, we consider the function space
\begin{equation*}
\mathcal{A}_1(\mathbb{C}^2):=\Set{F:\mathbb{C}^2\rightarrow\mathbb{C}\text{ entire} | \exists A,B\geq 0:\,|F(\mathbf{z})|\leq Ae^{B|\mathbf{z}|},\text{ for every }\mathbf{z}\in\mathbb{C}^2},
\end{equation*}
equipped with the following notion of convergence: A sequence $(F_n)_n\in\mathcal{A}_1(\mathbb{C}^2)$ converges to $F_0\in\mathcal{A}_1(\mathbb{C}^2)$ if and only if there exists some $B\geq 0$, such that
\begin{equation}\label{Eq_A1_convergence}
\lim\limits_{n\rightarrow\infty}\sup\limits_{z\in\mathbb{C}^2}|F_n(z)-F_0(z)|e^{-B|z|}=0.
\end{equation}
The standard examples of superoscillatory functions, which serve as initial condition in the above Schrödinger equation are of the form
\begin{equation}\label{Eq_Fn_example}
F_n(\mathbf{x})=\sum\limits_{j=0}^nC_j(n)e^{ik_j(n)^{p_1}x_1+ik_j(n)^{p_2}x_2},\quad\mathbf{x}=\vvect{x_1}{x_2}\in\mathbb{R}^2,
\end{equation}
for some $a>1$ and the coefficients are given by
$$
C_j(n)=\binom{n}{j}\Big(\frac{1+a}{2}\Big)^{n-j}\Big(\frac{1-a}{2}\Big)^j,
$$
and
$$
k_j(n)=1-\frac{2j}{n},
 $$
where $p_1,p_2\in\mathbb{N}$. The superoscillatory property of these functions comes from the fact that although that the frequencies $|k_j(n)|\leq 1$ are in modulus bounded by $1$, the sequence $(F_n)_n$ of functions converges as
\begin{equation*}
\lim\limits_{n\rightarrow\infty}F_n(\mathbf{x})=e^{ia^{p_1}x_1+ia^{p_2}x_2},\quad\mathbf{x}\in\mathbb{R}^2,
\end{equation*}
to a plane wave with frequency $a>1$. However, also other types of superoscillating functions were considered in the past by different physical and mathematical communities. An overview, but also one general definition, was given in the recent paper \cite{BCSS22}, which puts most of the existing notions of superoscillations into a common framework.
In the paper \cite{BCSS22} are considered superoscillations in one dimension,
the natural two-dimensional extension, which we will need in this paper, reads as follows:

\begin{defi}[Superoscillations]\label{defi_Superoscillations}
A sequence of functions of the form
\begin{equation}\label{Eq_Fn_integral}
F_n(\mathbf{z})=\int_{|\mathbf{k}|\leq k_0}e^{i\mathbf{kz}}d\mu_n(\mathbf{k}),\quad\mathbf{z}\in\mathbb{C}^2,
\end{equation}
with a common maximal frequency $k_0>0$ and complex Borel measures $\mu_n$ on the closed ball $\overline{B_{k_0}(\mathbf{0})}\subseteq\mathbb{R}^2$ of radius $k_0$, is called superoscillating, if there exists some $\mathbf{a}\in\mathbb{R}^2$ with $|\mathbf{a}|>k_0$, such that
\begin{equation}\label{Eq_Fn_convergence}
\lim\limits_{n\rightarrow\infty}F_n(\mathbf{z})=e^{i\mathbf{az}}\quad\text{in }\mathcal{A}_1(\mathbb{C}^2).
\end{equation}
Here the products $\mathbf{kz}:=k_1z_1+k_2z_2$ of real or complex vectors is understood in the usual bilinear sense.
\end{defi}

\begin{bem}
We point out that any function $F_n$ of the form \eqref{Eq_Fn_integral} is automatically contained in $\mathcal{A}_1(\mathbb{C}^2)$. The exponential bound is given by the estimate
\begin{equation*}
|F_n(\mathbf{z})|\leq\int_{|\mathbf{k}|\leq k_0}e^{|\mathbf{kz}|}d|\mu_n|(\mathbf{k})\leq|\mu_n|\big(\overline{U_{k_0}(\mathbf{0})}\big)e^{k_0|\mathbf{z}|},\quad\mathbf{z}\in\mathbb{C}^2,
\end{equation*}
with $|\mu_n|$ the total variation of the complex measure $\mu_n$. The holomorphicity also follows from this locally uniform upper bound and a version of the dominated convegence theorem, which allows to interchange derivative and integral and leads to a holomorphic function $F_n(\mathbf{z})$.
\end{bem}

\medskip
To study the evolution of superoscillating functions we will introduce for every $t>0$, $\mathbf{x}\in\Omega$ an infinite order differential operator of the form
\begin{equation*}
U_\DN(t,\mathbf{x})=\sum\limits_{n_1,n_2=0}^\infty c_{n_1,n_2}(t,\mathbf{x})\frac{\partial^{n_1+n_2}}{\partial z_1^{n_1}\partial z_2^{n_2}},
\end{equation*}
where, the coefficients $c_{n_1,n_2}(t,\mathbf{x})$ depend on the potential and on the auxiliary  complex variables $z_1$ and $z_2$.
These operators applied to the initial datum $F=F_n$ of \eqref{Eq_Schroedinger_equation}, analytically extended to a holomorphic function, give the solution $\Psi_\DN(t,\mathbf{x})$ as
\begin{equation*}
\Psi_\DN(t,\mathbf{x})=U_\DN(t,\mathbf{x})F(\mathbf{z})\Big|_{\mathbf{z}=\mathbf{0}}.
\end{equation*}
As we will see the continuity on spaces of entire functions 
of the above operator $U_\DN$ is of crucial importance in the investigation of the time evolution of superoscillations. 

\medskip
{\em Plan of the paper.}
In Section \ref{sec_The_Greens_function_of_the_half_plane_barrier}
we consider the Green's function of the half-plane barrier 
and the integral representation of the solution of
the Cauchy problem for the Schrodinger equation using Fresnel integrals.

In Section \ref{sec_The_time_evolution_operator}
we identify suitable  infinite order differential operators associated with  half-plane barrier that will be the key tools to study, in Section 
\ref{sec_Time_persistence_of_superoscillations},
the time persistence of superoscillations and the supershift
property of the solution of Schr\"odinger equation with superoscillatory initial datum.

\section{The Green's function of the half-plane barrier}\label{sec_The_Greens_function_of_the_half_plane_barrier}

The strategy to solve the Schrödinger equation of the half-plane barrier is based on Green's functions techniques, i.e., we write the solution of the Schrödinger
equation with initial condition $F$ as an integral of the form
\begin{equation}\label{Eq_Psi_formal}
\Psi(t,\mathbf{x})=\int_\Omega G(t,\mathbf{x},\mathbf{y})F(\mathbf{y})d\mathbf{y},\quad t>0,\,\mathbf{x}\in\Omega.
\end{equation}
The Green's function $G$ for the particular problem of the half-plane barrier is calculated in \cite{S82}, and using polar coordinates $\mathbf{x}=r\vect{\cos\varphi}{\sin\varphi}$, $\mathbf{y}=\vect{\cos\theta}{\sin\theta}$ with $r,\rho>0$, $\varphi,\theta\in(-\frac{\pi}{2},\frac{3\pi}{2})$, explicitly given by
\begin{subequations}\label{Eq_G}
\begin{align}
G_\D(t,\mathbf{x},\mathbf{y})&=\frac{e^{-\frac{(r+\rho)^2}{4it}}}{8i\pi t}\bigg(\Lambda\bigg(\frac{\sqrt{r\rho}\,\cos(\frac{\varphi-\theta}{2})}{\sqrt{it}}\bigg)-\Lambda\bigg(-\frac{\sqrt{r\rho}\,\sin(\frac{\varphi+\theta}{2})}{\sqrt{it}}\bigg)\bigg), \label{Eq_GD} \\
G_\N(t,\mathbf{x},\mathbf{y})&=\frac{e^{-\frac{(r+\rho)^2}{4it}}}{8i\pi t}\bigg(\Lambda\bigg(\frac{\sqrt{r\rho}\,\cos(\frac{\varphi-\theta}{2})}{\sqrt{it}}\bigg)+\Lambda\bigg(-\frac{\sqrt{r\rho}\,\sin(\frac{\varphi+\theta}{2})}{\sqrt{it}}\bigg)\bigg).
\end{align}
\end{subequations}
Here the indices $D$ and $N$ indicate the type of boundary conditions (Dirichlet or Neumann) in \eqref{Eq_Schroedinger_equation}, and for a shorter notation we used the entire function
\begin{equation}\label{Eq_Lambda}
\Lambda(z):=\frac{2}{\sqrt{\pi}}\int_0^\infty e^{-s^2-2zs}ds,\quad z\in\mathbb{C}.
\end{equation}
It can be shown that $\Lambda(z)=e^{z^2}(1-\erf(z))$ is a modification of the well-known error function. Since we allow initial conditions $F\in\mathcal{A}_1(\mathbb{C}^2)$, which may grow exponentially at $\infty$, some regularization is needed in the integral \eqref{Eq_Psi_formal}, such that the solution of \eqref{Eq_Schroedinger_equation} can be written as
\begin{equation}\label{Eq_Psi_regularized}
\Psi_\DN(t,\mathbf{x})=\lim\limits_{\varepsilon\rightarrow 0^+}\int_\Omega e^{-\varepsilon|\mathbf{y}|^2}G_\DN(t,\mathbf{x},\mathbf{y})F(\mathbf{y})d\mathbf{y},\qquad t>0,\,\mathbf{x}\in\Omega.
\end{equation}
The aim of this section is to find an absolute convergent integral representation of the wave functions in \eqref{Eq_Psi_regularized}, which will then be needed in the sequel to prove the main results of the paper. The key ingredient will be the so called Fresnel integral technique, which roughly speaking rotates the domain of integration into the complex plane and produces in this way an absolute convergent integrand. Note that the following Lemma~\ref{lem_Fresnel_integral} is a special version of \cite[Proposition 2.1]{ABCS22}, where also a proof can be found.

\begin{lem}[Fresnel integral]\label{lem_Fresnel_integral}
Let $a>0$ and $f:\mathbb{C}_{\Re>0}\rightarrow\mathbb{C}$ be holomorphic on
\begin{equation*}
\mathbb{C}_{\Re>0}:=\Set{z\in\mathbb{C} | \Re(z)>0},
\end{equation*}
and satisfies the estimate
\begin{equation}\label{Eq_Fresnel_estimate}
|f(z)|\leq Ae^{B|z|},\qquad z\in\mathbb{C}_{\Re>0},
\end{equation}
for some and $A,B\geq 0$. Then, for every $\alpha\in(0,\frac{\pi}{2})$, we get
\begin{equation*}
\lim\limits_{\varepsilon\rightarrow 0^+}\int_0^\infty e^{-\varepsilon y^2}e^{iay^2}f(y)dy=e^{i\alpha}\int_0^\infty e^{ia(ye^{i\alpha})^2}f(ye^{i\alpha})dy.
\end{equation*}
\end{lem}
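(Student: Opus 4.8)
The plan is to prove this Fresnel-type identity by a standard contour-rotation argument, using the regularizing factor $e^{-\varepsilon y^2}$ to make everything absolutely convergent before passing to the limit. First I would fix $\alpha\in(0,\tfrac{\pi}{2})$ and $\varepsilon>0$, and consider the holomorphic integrand $g_\varepsilon(z):=e^{-\varepsilon z^2}e^{iaz^2}f(z)$ on the sector $S_\alpha:=\{re^{i\beta}\;|\;r>0,\,\beta\in[0,\alpha]\}$. Since $\alpha<\tfrac{\pi}{2}$, this whole closed sector (minus the origin) lies in $\mathbb{C}_{\Re>0}$, so $f$ is holomorphic there and satisfies \eqref{Eq_Fresnel_estimate}. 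I would then integrate $g_\varepsilon$ around the boundary of the truncated sector $\{re^{i\beta}\;|\;r\in[\delta,R],\,\beta\in[0,\alpha]\}$: by Cauchy's theorem the contour integral vanishes, so the integral along the real segment $[\delta,R]$ equals the integral along the rotated segment $\{re^{i\alpha}\;|\;r\in[\delta,R]\}$ plus the two circular arcs of radii $\delta$ and $R$.

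The key estimates are on the two arcs. On the small arc of radius $\delta$, the integrand is bounded (uniformly in $\beta$) by $\sup_{|z|\le\delta}|g_\varepsilon(z)|$, and its length is $\alpha\delta$, so this contribution is $O(\delta)\to 0$ as $\delta\to 0^+$. On the large arc of radius $R$, I would write $z=Re^{i\beta}$ and compute $\Re\big((-\varepsilon+ia)z^2\big)=-\varepsilon R^2\cos(2\beta)-aR^2\sin(2\beta)$. For $\beta\in[0,\alpha]$ with $\alpha<\tfrac{\pi}{2}$ one has $\sin(2\beta)\ge 0$, and one must control the possibly positive term $-\varepsilon R^2\cos(2\beta)$ when $2\beta>\tfrac{\pi}{2}$; but since $\cos(2\beta)\ge\cos(2\alpha)\ge -1$, this term is at most $\varepsilon R^2$, and combined with $|f(Re^{i\beta})|\le Ae^{BR}$ the integrand on the large arc is bounded by $A e^{\varepsilon R^2}e^{BR}$. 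That does \emph{not} decay — so a cruder bound on the whole arc is not enough. The fix, as usual in Fresnel-type arguments, is a Jordan-type estimate: split $\beta\in[0,\alpha]$ into a part near $0$ and the rest, use that on $[\eta,\alpha]$ the factor $e^{-aR^2\sin(2\beta)}$ is super-exponentially small (beating $e^{\varepsilon R^2+BR}$ once $R$ is large, for $\varepsilon$ small enough — here we do need $\varepsilon$ below some threshold, which is fine since we take $\varepsilon\to 0^+$), while on $[0,\eta]$ the arc length $\eta R$ times the bound can be made small by first choosing $\eta$ small. This shows the large-arc contribution tends to $0$ as $R\to\infty$, for each fixed small $\varepsilon$.

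Taking $\delta\to 0^+$ and $R\to\infty$ then yields, for each sufficiently small $\varepsilon>0$,
\begin{equation*}
\int_0^\infty e^{-\varepsilon y^2}e^{iay^2}f(y)\,dy = e^{i\alpha}\int_0^\infty e^{-\varepsilon(ye^{i\alpha})^2}e^{ia(ye^{i\alpha})^2}f(ye^{i\alpha})\,dy,
\end{equation*}
where both sides converge absolutely (the right-hand side because $\Re(ia(ye^{i\alpha})^2)=-ay^2\sin(2\alpha)<0$ dominates $e^{By}$). It remains to let $\varepsilon\to 0^+$. On the right-hand side the integrand converges pointwise to $e^{ia(ye^{i\alpha})^2}f(ye^{i\alpha})$ and is dominated uniformly in $\varepsilon\in(0,\varepsilon_0)$ by $A\,e^{\varepsilon_0 y^2\max(0,-\cos 2\alpha)}e^{-ay^2\sin(2\alpha)}e^{By}$, which is integrable once $\varepsilon_0\sin(2\alpha)$-type considerations are handled — more simply, for $\alpha\in(0,\tfrac\pi4]$ one has $\cos(2\alpha)\ge 0$ so $e^{-\varepsilon(ye^{i\alpha})^2}$ only helps, and for $\alpha\in(\tfrac\pi4,\tfrac\pi2)$ one picks $\varepsilon_0<a\tan(2\alpha)^{-1}\cdot$const so the Gaussian still wins; dominated convergence then applies. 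This gives the claimed identity, the left-hand limit being exactly the object in the statement.

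The main obstacle is the large-arc estimate: the regularizer $e^{-\varepsilon z^2}$ has the \emph{wrong sign} of real part on part of the sector (when $\Re(z^2)<0$), so one cannot simply bound $|e^{-\varepsilon z^2}|\le 1$ on the arc. The argument must exploit that the genuinely oscillatory-turned-decaying factor $e^{iaz^2}$, once rotated, decays like $e^{-aR^2\sin(2\beta)}$ and overwhelms both $e^{\varepsilon R^2}$ and $e^{BR}$ away from $\beta=0$; near $\beta=0$ one instead uses smallness of the arc length. Everything else — Cauchy's theorem on the sector, the small-arc bound, absolute convergence of the rotated integral, and the final dominated-convergence passage — is routine. (As noted in the statement, this is a special case of \cite[Proposition 2.1]{ABCS22}, so one may alternatively just cite that; the above is the self-contained route.)
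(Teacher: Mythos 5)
The paper never proves this lemma itself; it only cites \cite[Proposition 2.1]{ABCS22}. So a self-contained contour-rotation proof is a legitimate alternative route, and your overall architecture (Cauchy's theorem on the truncated sector, vanishing small arc, an identity between the two ray integrals for each fixed small $\varepsilon$, then dominated convergence as $\varepsilon\to 0^+$) is the right one. However, your treatment of the large arc near $\beta=0$ fails as stated. On $[0,\eta]$ you propose to bound the contribution by ``arc length $\eta R$ times the bound'' and make it small by first choosing $\eta$ small; but the only bound you have produced there is of order $Ae^{BR}$ (you bound the factors $e^{-\varepsilon R^2\cos 2\beta}$ and $e^{-aR^2\sin 2\beta}$ crudely), so the contribution is of order $\eta R\,Ae^{BR}$, which tends to $+\infty$ as $R\to\infty$ for every fixed $\eta>0$ whenever $B>0$. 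Even the sharper Jordan-inequality version, integrating $e^{-cR^2\beta}$ in $\beta$ over the arc, only yields $O(e^{BR}/R)$, which still diverges. The exponential growth of $f$ is exactly what breaks the classical Jordan-lemma step here, so this is a genuine gap, not a routine omission.

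The fix stays inside your framework but uses the regularizer where it helps rather than treating it only as a nuisance: near $\beta=0$ one has $\cos 2\beta\geq\cos 2\eta>0$, so $|e^{-\varepsilon z^2}|\leq e^{-\varepsilon R^2\cos 2\eta}$ supplies genuine Gaussian decay that beats $Re^{BR}$, while away from $\beta=0$ your $e^{-aR^2\sin 2\beta}$ estimate takes over (for $\varepsilon$ below the threshold you already identified). More cleanly, no splitting is needed at all: since $\Re\big((-\varepsilon+ia)z^2\big)=-R^2\big(\varepsilon\cos 2\beta+a\sin 2\beta\big)=-R^2\sqrt{\varepsilon^2+a^2}\,\sin\!\big(2\beta+\arctan(\varepsilon/a)\big)$ and $2\alpha+\arctan(\varepsilon/a)<\pi$ for $\varepsilon$ small, the concavity of the sine on $[0,\pi]$ gives $\varepsilon\cos 2\beta+a\sin 2\beta\geq m:=\min\big(\varepsilon,\,\varepsilon\cos 2\alpha+a\sin 2\alpha\big)>0$ uniformly for $\beta\in[0,\alpha]$, so the whole arc contribution is $O\big(Re^{BR-mR^2}\big)\to 0$. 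With this repair the remainder of your argument (the fixed-$\varepsilon$ identity, absolute convergence of the rotated integral, and the dominated-convergence passage with the case distinction at $\alpha=\pi/4$) goes through.
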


The following Theorem~\ref{satz_Psi_Fresnel} now uses this Fresnel integral technique, to rewrite the regularized integral \eqref{Eq_Psi_regularized} as an absolute convergent integral in the complex plan.

\begin{satz}\label{satz_Psi_Fresnel}
Let $F\in\mathcal{A}_1(\mathbb{C}^2)$. Then, for every $\alpha\in(0,\frac{\pi}{2})$ the functions $\Psi_\D$ and $\Psi_\N$ in \eqref{Eq_Psi_regularized} can be written as the absolute convergent integral
\begin{equation}\label{Eq_Psi_Fresnel}
\Psi_\DN(t,\mathbf{x})=e^{2i\alpha}\int_0^\infty\int_{-\frac{\pi}{2}}^{\frac{3\pi}{2}}G_\DN\big(t,\mathbf{x},\rho e^{i\alpha}\vvect{\cos\theta}{\sin\theta}\big)F\big(\rho e^{i\alpha}\vvect{\cos\theta}{\sin\theta}\big)\rho\,d\theta d\rho.
\end{equation}
\end{satz}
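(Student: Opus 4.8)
\emph{Proof idea.} The plan is to pass to polar coordinates $\mathbf{y}=\rho\vvect{\cos\theta}{\sin\theta}$ (with $\rho>0$, $\theta\in(-\tfrac\pi2,\tfrac{3\pi}2)$, the barrier $\Gamma$ being the null set $\theta=-\tfrac\pi2$) in the inner integral of \eqref{Eq_Psi_regularized}, to apply the Fresnel rotation of Lemma~\ref{lem_Fresnel_integral} in the radial variable on each ray $\theta=\mathrm{const}$ separately, and then to pull $\lim_{\varepsilon\to0^+}$ through the angular integral by dominated convergence. For real $\mathbf{y}$ one has $|e^{-(r+\rho)^2/(4it)}|=1$, and from $|\Lambda(w)|\le\tfrac2{\sqrt\pi}\int_0^\infty e^{-s^2-2\Re(w)s}\,ds\le 2e^{\Re(w)^2}$ together with $\Re(w)^2\le r\rho/(2t)$ for the two arguments $w$ of the $\Lambda$–functions in \eqref{Eq_G} one gets the crude bound $|G_\DN(t,\mathbf{x},\mathbf{y})|\le\tfrac1{2\pi t}e^{r|\mathbf{y}|/(2t)}$; with $|F(\mathbf{y})|\le Ae^{B|\mathbf{y}|}$ this makes the $\varepsilon$–regularized integrand absolutely integrable for each $\varepsilon>0$, so Fubini gives
\[
\Psi_\DN(t,\mathbf{x})=\lim_{\varepsilon\to0^+}\int_{-\pi/2}^{3\pi/2}\Big(\int_0^\infty e^{-\varepsilon\rho^2}G_\DN\big(t,\mathbf{x},\rho\vvect{\cos\theta}{\sin\theta}\big)F\big(\rho\vvect{\cos\theta}{\sin\theta}\big)\rho\,d\rho\Big)d\theta.
\]

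Fixing $\theta$, I would split off the oscillatory Gaussian: since $-\tfrac{(r+\zeta)^2}{4it}=-\tfrac{r^2}{4it}-\tfrac{r\zeta}{2it}+\tfrac{i}{4t}\zeta^2$, with $a:=\tfrac1{4t}$ the radial integrand equals $e^{ia\rho^2}g_\theta(\rho)$, where $g_\theta(\zeta)$ collects the constant $\tfrac1{8i\pi t}e^{-r^2/(4it)}$, the entire factor $e^{-r\zeta/(2it)}$, the combination $\Lambda\!\big(\tfrac{\sqrt{r\zeta}\cos(\frac{\varphi-\theta}2)}{\sqrt{it}}\big)\mp\Lambda\!\big(-\tfrac{\sqrt{r\zeta}\sin(\frac{\varphi+\theta}2)}{\sqrt{it}}\big)$, the datum $F\big(\zeta\vvect{\cos\theta}{\sin\theta}\big)$, and the Jacobian $\zeta$. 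The key claim is that $g_\theta$ is holomorphic on $\mathbb{C}_{\Re>0}$ and satisfies $|g_\theta(\zeta)|\le A'e^{B'|\zeta|}$ there with $A',B'$ independent of $\theta$. Holomorphy holds because every factor is entire in $\zeta$ except the $\Lambda$–terms, whose arguments contain the principal branch of $\sqrt\zeta$, holomorphic on $\mathbb{C}\setminus(-\infty,0]\supset\mathbb{C}_{\Re>0}$; this branch point at $0$ is precisely why $g_\theta$ cannot be taken entire and one must work on the half–plane required by Lemma~\ref{lem_Fresnel_integral}. The bound is the structural point: after removing $e^{ia\zeta^2}$ the surviving exponent $-r\zeta/(2it)$ is only linear, the $\Lambda$–arguments scale like $\sqrt\zeta$ so that $|\Lambda(\cdots)|\le 2e^{\Re(\cdots)^2}\le 2e^{r|\zeta|/t}$, and the unit vector drops out of $|F(\zeta\vvect{\cos\theta}{\sin\theta})|\le Ae^{B|\zeta|}$; multiplying these yields \eqref{Eq_Fresnel_estimate} for $g_\theta$. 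Lemma~\ref{lem_Fresnel_integral} with $f=g_\theta$ and this $a$ then gives, for the given $\alpha\in(0,\tfrac\pi2)$ and every $\theta$,
\[
\lim_{\varepsilon\to0^+}\int_0^\infty e^{-\varepsilon\rho^2}G_\DN\big(t,\mathbf{x},\rho\vvect{\cos\theta}{\sin\theta}\big)F\big(\rho\vvect{\cos\theta}{\sin\theta}\big)\rho\,d\rho=e^{2i\alpha}\int_0^\infty G_\DN\big(t,\mathbf{x},\rho e^{i\alpha}\vvect{\cos\theta}{\sin\theta}\big)F\big(\rho e^{i\alpha}\vvect{\cos\theta}{\sin\theta}\big)\rho\,d\rho,
\]
$G_\DN$ being continued in the radial variable through the branch $\sqrt{r\rho e^{i\alpha}}=\sqrt{r\rho}\,e^{i\alpha/2}$ (so that $e^{ia(\rho e^{i\alpha})^2}g_\theta(\rho e^{i\alpha})=G_\DN(\cdots)F(\cdots)\,\rho e^{i\alpha}$ by analytic continuation of the defining identity).

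To interchange $\lim_{\varepsilon\to0^+}$ with $\int_{-\pi/2}^{3\pi/2}d\theta$ I need an $\varepsilon$–uniform, $\theta$–integrable majorant of the inner integral, and the trick is to rotate the contour not by $\alpha$ but by a small auxiliary angle $\beta\in(0,\tfrac\pi4)$: since $\cos2\gamma>0$ for all $\gamma\in[0,\beta]$, the factor $e^{(ia-\varepsilon)\zeta^2}$ has real part $\le-\varepsilon\cos2\beta<0$ on the whole closed sector of opening $\beta$, so for \emph{every} $\varepsilon>0$ the arc at infinity vanishes (by the growth bound on $g_\theta$) and Cauchy's theorem turns the inner integral into $e^{2i\beta}\int_0^\infty e^{-\varepsilon\rho^2e^{2i\beta}}G_\DN\big(t,\mathbf{x},\rho e^{i\beta}\vvect{\cos\theta}{\sin\theta}\big)F(\cdots)\rho\,d\rho$. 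On this contour $|e^{-\varepsilon\rho^2e^{2i\beta}}|\le1$ and $|e^{-(r+\rho e^{i\beta})^2/(4it)}|=e^{-(2r\rho\sin\beta+\rho^2\sin2\beta)/(4t)}$ genuinely decays, so the integrand is dominated by $\tfrac{A}{2\pi t}\,\rho\,e^{-\rho^2\sin2\beta/(4t)+(r/t+B)\rho}$, uniformly in $\varepsilon>0$ and $\theta$; its $\rho$–integral is a finite number independent of $\theta$. Dominated convergence then moves the limit inside the $\theta$–integral: the left side becomes $\Psi_\DN(t,\mathbf{x})$, the right side becomes $e^{2i\alpha}\int_{-\pi/2}^{3\pi/2}\!\int_0^\infty G_\DN\big(t,\mathbf{x},\rho e^{i\alpha}\vvect{\cos\theta}{\sin\theta}\big)F\big(\rho e^{i\alpha}\vvect{\cos\theta}{\sin\theta}\big)\rho\,d\rho\,d\theta$, and the same estimate with $\beta$ replaced by $\alpha$ (still $\sin2\alpha>0$) shows this double integral converges absolutely, so Fubini reorders it into the form \eqref{Eq_Psi_Fresnel}.

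\emph{Main obstacle.} The crux is the exponential estimate for $g_\theta$ on $\mathbb{C}_{\Re>0}$: one must check that the $\Lambda$–factors, which a priori grow like $e^{\Re(\cdot)^2}$ and carry a branch point at the origin, contribute only order–one growth once the oscillatory Gaussian $e^{ia\zeta^2}$ has been split off — this is exactly what places the Green's function into the hypothesis class of Lemma~\ref{lem_Fresnel_integral}. A secondary technical point is securing an $\varepsilon$–uniform dominating function for the angular integral; the device above is that a rotation by a \emph{small fixed} angle $\beta<\tfrac\pi4$ is admissible for all $\varepsilon>0$, whereas the full rotation by $\alpha$ is only reached in the limit $\varepsilon\to0^+$ via Lemma~\ref{lem_Fresnel_integral}.
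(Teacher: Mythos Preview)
Your argument is correct, but it is organized differently from the paper's proof. The paper first performs the angular integration and then applies Lemma~\ref{lem_Fresnel_integral} \emph{once}, to the single function
\[
z\longmapsto \int_{-\frac{\pi}{2}}^{\frac{3\pi}{2}}\widetilde{G}_\DN\big(t,\mathbf{x},z\vvect{\cos\theta}{\sin\theta}\big)F\big(z\vvect{\cos\theta}{\sin\theta}\big)z\,d\theta,
\]
after splitting $G_\DN=e^{-z^2/(4it)}\widetilde{G}_\DN$ exactly as you do. Holomorphy and the exponential bound \eqref{Eq_Fresnel_estimate} are checked for this $\theta$-averaged function via the same $\Lambda$-estimate you use (in the cruder form $|\Lambda(w)|\le 2e^{|w|^2}$), which gives $|\widetilde{G}_\DN|\le\frac{1}{2\pi t}e^{3r|z|/(2t)}$ and hence a $\theta$-uniform bound for the integrand. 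Because the $\theta$-integral is already inside the function $f$ handed to Lemma~\ref{lem_Fresnel_integral}, no separate justification of the interchange $\lim_{\varepsilon\to0^+}\leftrightarrow\int d\theta$ is needed.

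By contrast, you apply Lemma~\ref{lem_Fresnel_integral} ray by ray and then have to pull the $\varepsilon$-limit through the angular integral, which you accomplish with the auxiliary $\beta<\pi/4$ rotation. That device is sound: for $\beta<\pi/4$ one has $\cos2\gamma>0$ on $[0,\beta]$, so both the arc-at-infinity vanishes for every $\varepsilon>0$ and $|e^{-\varepsilon\rho^2e^{2i\beta}}|\le1$, yielding an $\varepsilon$-free majorant. This is genuinely more work than the paper's route, and the paper's ordering also makes the absolute convergence of \eqref{Eq_Psi_Fresnel} immediate (it is the very hypothesis being verified for Lemma~\ref{lem_Fresnel_integral}), whereas you recover it only at the end by redoing the estimate with $\beta$ replaced by $\alpha$. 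On the other hand, your ray-by-ray viewpoint makes the role of the radial branch point at $\zeta=0$ and the analytic continuation of $G_\DN$ more transparent, and the $\beta$-trick is a useful standalone technique.
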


\begin{proof}
First, we transform the integral \eqref{Eq_Psi_regularized} into polar coordinates, which is
\begin{equation}\label{Eq_PsiD_polar}
\Psi_\DN(t,\mathbf{x})=\lim\limits_{\varepsilon\rightarrow 0^+}\int_0^\infty\int_{-\frac{\pi}{2}}^{\frac{3\pi}{2}}e^{-\varepsilon\rho^2}G_\DN\big(t,\mathbf{x},\rho\vvect{\cos\theta}{\sin\theta}\big)F\big(\rho\vvect{\cos\theta}{\sin\theta}\big)\rho\,d\theta d\rho.
\end{equation}
Next, we note that the function $(0,\infty)\mapsto G_\DN(t,\mathbf{x},\rho\vect{\cos\theta}{\sin\theta})$ holomorphically extends to $\mathbb{C}_{\Re>0}$ by simply replacing $\rho\in(0,\infty)$ by $z\in\mathbb{C}_{\Re>0}$ in \eqref{Eq_G}. Note that it could also be extended to $\mathbb{C}\setminus(-\infty,0]$, which is the maximal domain of the square root in \eqref{Eq_Gtilde}, but this is not necessary for our purposes. Furthermore, we can decompose this extended Green's function into
\begin{align}
G_\DN\big(t,\mathbf{x},&z\vvect{\cos\theta}{\sin\theta}\big) \notag \\
&=e^{-\frac{z^2}{4it}}\underbrace{\frac{e^{-\frac{r^2+2rz}{4it}}}{8i\pi t}\bigg(\Lambda\bigg(\frac{\sqrt{rz}\,\cos(\frac{\varphi-\theta}{2})}{\sqrt{it}}\bigg)\mp\Lambda\bigg(-\frac{\sqrt{rz}\,\sin(\frac{\varphi+\theta}{2})}{\sqrt{it}}\bigg)\bigg)}_{=:\widetilde{G}_\DN\big(t,\mathbf{x},z\vvect{\cos\theta}{\sin\theta}\big)}, \label{Eq_Gtilde}
\end{align}
and hence write the integral \eqref{Eq_PsiD_polar} as
\begin{equation*}
\Psi_\DN(t,\mathbf{x})=\lim\limits_{\varepsilon\rightarrow 0^+}\int_0^\infty e^{-\varepsilon\rho^2}e^{-\frac{\rho^2}{4it}}\int_{-\frac{\pi}{2}}^{\frac{3\pi}{2}}\widetilde{G}_\DN\big(t,\mathbf{x},\rho\vvect{\cos\theta}{\sin\theta}\big)F\big(\rho\vvect{\cos\theta}{\sin\theta}\big)\rho\,d\theta d\rho.
\end{equation*}
The idea is now to apply the Fresnel integral technique of Lemma~\ref{lem_Fresnel_integral} with respect to the radial part of the above integral. To do so, we have to check if the function
\begin{equation}\label{Eq_Gtilde_integral}
z\mapsto\int_{-\frac{\pi}{2}}^{\frac{3\pi}{2}}\widetilde{G}_\DN\big(t,\mathbf{x},z\vvect{\cos\theta}{\sin\theta}\big)F\big(z\vvect{\cos\theta}{\sin\theta}\big)z\,d\theta
\end{equation}
is holomorphic on $\mathbb{C}_{\Re>0}$ and exponentially bounded as in \eqref{Eq_Fresnel_estimate}. It is obvious that the integrand
\begin{equation*}
z\mapsto\widetilde{G}_\DN\big(t,\mathbf{x},z\vvect{\cos\theta}{\sin\theta}\big)F\big(z\vvect{\cos\theta}{\sin\theta}\big)
\end{equation*}
is holomorphic in $\mathbb{C}_{\Re>0}$. Also knowing that
\begin{equation*}
\theta\mapsto\frac{d}{dz}\widetilde{G}_\DN\big(t,\mathbf{x},z\vvect{\cos\theta}{\sin\theta}\big)F\big(z\vvect{\cos\theta}{\sin\theta}\big),
\end{equation*}
is continuous, it follows that also the integral \eqref{Eq_Gtilde_integral} is holomorphic on $\mathbb{C}_{\Re>0}$ for every fixed $t>0$, $\mathbf{x}\in\Omega$. To verify the exponential bound \eqref{Eq_Fresnel_estimate} for the mapping \eqref{Eq_Gtilde_integral}, we first estimate the reduced Green's function $\widetilde{G}_\DN$ by
\begin{align}
\big|\widetilde{G}_\DN\big(t,\mathbf{x},z\vvect{\cos\theta}{\sin\theta}\big)\big|&=\frac{e^{-\frac{r\Im(z)}{2t}}}{8\pi t}\bigg|\Lambda\bigg(\frac{\sqrt{rz}\,\cos(\frac{\varphi-\theta}{2})}{\sqrt{it}}\bigg)\mp\Lambda\bigg(-\frac{\sqrt{rz}\,\sin(\frac{\varphi+\theta}{2})}{\sqrt{it}}\bigg)\bigg| \notag \\
&\leq\frac{e^{\frac{r|z|}{2t}}}{4\pi t}\Big(e^{\frac{r|z|}{t}\cos^2(\frac{\varphi-\theta}{2})}+e^{\frac{r|z|}{t}\sin^2(\frac{\varphi+\theta}{2})}\Big) \notag \\
&\leq\frac{1}{2\pi t}e^{\frac{3r|z|}{2t}},\quad z\in\mathbb{C}_{\Re>0}, \label{Eq_Gtilde_estimate}
\end{align}
where we used the estimate
\begin{equation*}
|\Lambda(z)|\leq\frac{2}{\sqrt{\pi}}\int_0^\infty e^{-s^2-2\Re(z)s}ds\leq\frac{2e^{|z|^2}}{\sqrt{\pi}}\int_\mathbb{R}e^{-(s+\Re(z))^2}ds=2e^{|z|^2},\quad z\in\mathbb{C},
\end{equation*}
of the function $\Lambda$ in \eqref{Eq_Lambda}. Since $F\in\mathcal{A}_1(\mathbb{C}^2)$, there exist constants $A,B\geq 0$, such that
\begin{equation}\label{Eq_F_estimate}
\big|F\big(z\vvect{\cos\theta}{\sin\theta}\big)\big|\leq Ae^{B\sqrt{|z\cos\theta|^2+|z\sin\theta|^2}}=Ae^{B|z|},\quad z\in\mathbb{C}.
\end{equation}
Combining now \eqref{Eq_Gtilde_estimate}, \eqref{Eq_F_estimate} as well as $|z|\leq e^{|z|}$, which is an immediate consequence of the power series expansion of the exponential, the integral in \eqref{Eq_Gtilde_integral} admits the estimate
\begin{equation}\label{Eq_Integral_Gtilde_estimate}
\bigg|\int_{-\frac{\pi}{2}}^{\frac{3\pi}{2}}\widetilde{G}_\DN\big(t,\mathbf{x},z\vvect{\cos\theta}{\sin\theta}\big)F\big(z\vvect{\sin\theta}{\cos\theta}\big)z\,d\theta\bigg|\leq\frac{A}{t}e^{(\frac{3r}{2t}+B+1)|z|},\quad z\in\mathbb{C}_{\Re>0}.
\end{equation}
Hence, we verified that the assumptions of Lemma~\ref{lem_Fresnel_integral} for the mapping \eqref{Eq_Gtilde_integral} are satisfied  and so we can write the integral \eqref{Eq_PsiD_polar} for any $\alpha\in(0,\frac{\pi}{2})$ in the form
\begin{equation*}
\Psi_\DN(t,\mathbf{x})=e^{i\alpha}\int_0^\infty e^{-\frac{(\rho e^{i\alpha})^2}{4it}}\int_{-\frac{\pi}{2}}^{\frac{3\pi}{2}}\widetilde{G}_\DN\big(t,\mathbf{x},\rho e^{i\alpha}\vvect{\cos\theta}{\sin\theta}\big)F\big(\rho e^{i\alpha}\vvect{\sin\theta}{\cos\theta}\big)\rho e^{i\alpha}d\theta d\rho,
\end{equation*}
which, after substituting the defintion of $\widetilde{G}_\DN$ from \eqref{Eq_Gtilde}, is exactly the stated representation \eqref{Eq_Psi_Fresnel}.
\end{proof}

\section{The infinite order differential operators associated \\ with  half-plane barrier }\label{sec_The_time_evolution_operator}

In this section we introduce, based on the Green's function integral \eqref{Eq_Psi_Fresnel}, another representation of the solution $\Psi_\DN(t,\mathbf{x})$ of \eqref{Eq_Schroedinger_equation}, using some infinite order differential operator acting on the initial condition $F$. More precisely, we use the two-dimensional power series representation
\begin{equation*}
F(\mathbf{z})=\sum\limits_{n_1,n_2=0}^\infty\frac{\partial_{z_1}^{n_1}\partial_{z_2}^{n_2}F(\mathbf{0})}{n_1!n_2!}z_1^{n_1}z_2^{n_2},\quad\mathbf{z}=\vvect{z_1}{z_2}\in\mathbb{C}^2,
\end{equation*}
to rewrite (for the moment formally) the function $\Psi_\DN(t,\mathbf{x})$ in \eqref{Eq_Psi_Fresnel} as
\begin{align}
\Psi_\DN(t,\mathbf{x})&=e^{2i\alpha}\int_0^\infty\int_{-\frac{\pi}{2}}^{\frac{3\pi}{2}}G_\DN\big(t,\mathbf{x},\rho e^{i\alpha}\vvect{\cos\theta}{\sin\theta}\big) \notag \\
&\hspace{3.5cm}\times\sum\limits_{n_1,n_2=0}^\infty\frac{\partial_{z_1}^{n_1}\partial_{z_2}^{n_2}F(\mathbf{0})}{n_1!n_2!}(\rho e^{i\alpha}\cos\theta)^{n_1}(\rho e^{i\alpha}\sin\theta)^{n_2}\rho\,d\theta d\rho \notag \\
&=\sum\limits_{n_1,n_2=0}^\infty\frac{e^{(n_1+n_2+2)i\alpha}}{n_1!n_2!}\int_0^\infty\int_{-\frac{\pi}{2}}^{\frac{3\pi}{2}}G_\DN\big(t,\mathbf{x},\rho e^{i\alpha}\vvect{\cos\theta}{\sin\theta}\big) \notag \\
&\hspace{3.5cm}\times(\cos\theta)^{n_1}(\sin\theta)^{n_2}\rho^{n_1+n_2+1}d\theta d\rho\frac{\partial^{n_1+n_2}}{\partial z_1^{n_1}\partial z_2^{n_2}}F(\mathbf{z})\Big|_{\mathbf{z}=\mathbf{0}} \notag \\
&=\sum\limits_{n_1,n_2=0}^\infty c_{n_1,n_2}(t,\mathbf{x})\frac{\partial^{n_1+n_2}}{\partial z_1^{n_1}\partial z_2^{n_2}}F(\mathbf{z})\Big|_{\mathbf{z}=\mathbf{0}}, \label{Eq_U_derivation}
\end{align}
using the coefficients
\begin{equation}\label{Eq_c}
c_{n_1,n_2}(t,\mathbf{x}):=\frac{e^{(n_1+n_2+2)i\alpha}}{n_1!n_2!}\int_0^\infty\int_{-\frac{\pi}{2}}^{\frac{3\pi}{2}}G_\DN\big(t,\mathbf{x},\rho e^{i\alpha}\vvect{\cos\theta}{\sin\theta}\big)(\cos\theta)^{n_1}(\sin\theta)^{n_2}\rho^{n_1+n_2+1}d\theta d\rho.
\end{equation}
The above computations mean, that using the \textit{infinite order differential operator}
\begin{equation}\label{Eq_U}
U_\DN(t,\mathbf{x}):=\sum\limits_{n_1,n_2=0}^\infty c_{n_1,n_2}(t,\mathbf{x})\frac{\partial^{n_1+n_2}}{\partial z_1^{n_1}\partial z_2^{n_2}},
\end{equation}
we can write the solution $\Psi_\DN(t,\mathbf{x})$  as
\begin{equation}\label{Eq_Psi_U}
\Psi_\DN(t,\mathbf{x})=U_\DN(t,\mathbf{x})F(z)\Big|_{z=0},\qquad t>0,\,\mathbf{x}\in\Omega.
\end{equation}
The main advantage of the representation \eqref{Eq_Psi_U} using the operator $U_\DN(t,\mathbf{x})$ will be, that the many properties $\Psi_\DN$, as the continuous dependency result  or the supershift property discussed in the sequel, will turn out to be simple consequences of the continuity of this operator in the space $\mathcal{A}_1(\mathbb{C}^2)$. However, in order to prove this continuity and also to make the calculations in \eqref{Eq_U_derivation} rigorous, we need the following lemma about the exponential boundedness of the derivatives of functions in $\mathcal{A}_1(\mathbb{C}^2)$.

\begin{lem}
If a function $F\in\mathcal{A}^1(\mathbb{C}^2)$ satisfies the estimate $|F(\mathbf{z})|\leq Ae^{B|\mathbf{z}|}$, for some $A\geq 0$, $B>0$, then for every $n_1,n_2\in\mathbb{N}_0$, the derivatives of $F$ can be estimated as
\begin{equation}\label{Eq_F_derivative_estimate}
\Big|\frac{\partial^{n_1+n_2}}{\partial z_1^{n_1}\partial z_2^{n_2}}F(\mathbf{z})\Big|\leq A(eB)^{n_1+n_2}e^{B|\mathbf{z}|},\qquad\mathbf{z}=\vvect{z_1}{z_2}\in\mathbb{C}^2.
\end{equation}
\end{lem}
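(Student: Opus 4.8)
The natural approach is via the multivariate Cauchy integral formula on polydiscs, exploiting that $F$ is entire on $\mathbb{C}^2$ so the radii of the polydiscs can be chosen freely and then optimized. First I would fix $\mathbf{z}=\vvect{z_1}{z_2}\in\mathbb{C}^2$ and, for radii $r_1,r_2>0$ to be chosen later, write
\begin{equation*}
\frac{\partial^{n_1+n_2}}{\partial z_1^{n_1}\partial z_2^{n_2}}F(\mathbf{z})=\frac{n_1!\,n_2!}{(2\pi i)^2}\oint_{|\zeta_1-z_1|=r_1}\oint_{|\zeta_2-z_2|=r_2}\frac{F(\zeta_1,\zeta_2)}{(\zeta_1-z_1)^{n_1+1}(\zeta_2-z_2)^{n_2+1}}\,d\zeta_2\,d\zeta_1.
\end{equation*}
Taking absolute values, bounding $|F(\zeta_1,\zeta_2)|\leq Ae^{B|\boldsymbol{\zeta}|}$ on the torus and using $|\boldsymbol{\zeta}|=\sqrt{|\zeta_1|^2+|\zeta_2|^2}\leq|\mathbf{z}|+\sqrt{r_1^2+r_2^2}\leq|\mathbf{z}|+r_1+r_2$ on that torus, gives
\begin{equation*}
\Big|\frac{\partial^{n_1+n_2}}{\partial z_1^{n_1}\partial z_2^{n_2}}F(\mathbf{z})\Big|\leq A\,e^{B|\mathbf{z}|}\cdot\frac{n_1!}{r_1^{n_1}}e^{Br_1}\cdot\frac{n_2!}{r_2^{n_2}}e^{Br_2}.
\end{equation*}

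The remaining step is to optimize each single-variable factor $g(r):=n!\,r^{-n}e^{Br}$ over $r>0$. Differentiating, the minimum is at $r=n/B$, yielding $g(n/B)=n!\,(B/n)^n e^{n}=n!\,n^{-n}e^{n}B^{n}$. Now I would invoke the elementary bound $n!\leq n^n e^{-n}\cdot e\cdot\sqrt{n}$ coming from Stirling, or even just the cruder $n!\leq n^n$ which already gives $g(n/B)\leq e^{n}B^{n}=(eB)^{n}$ — that cruder bound suffices for the stated estimate. (For $n=0$ one checks directly that the factor is $1=(eB)^0$, so there is no issue at the boundary; strictly one takes the limit $r\to 0^+$ or just notes $\partial^0 F=F$ satisfies the bound trivially.) Multiplying the optimized $z_1$- and $z_2$-factors together produces exactly $(eB)^{n_1}(eB)^{n_2}=(eB)^{n_1+n_2}$, and combined with the $Ae^{B|\mathbf{z}|}$ prefactor this is precisely \eqref{Eq_F_derivative_estimate}.

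The only mild subtlety — and the one place to be careful — is the hypothesis $B>0$: the optimization $r=n/B$ requires $B\neq 0$, which is why the lemma excludes $B=0$. (If $B=0$ then $F$ is bounded entire, hence constant by Liouville, and all derivatives vanish, so the statement is vacuously true anyway, but the proof above needs $B>0$ to locate the critical radius.) I expect no genuine obstacle here; the argument is a direct Cauchy-estimate-plus-optimization, and the main thing to get right is bookkeeping the two variables separately so the $e^{B|\mathbf z|}$ factor is not double-counted — which is automatic since $e^{Br_1}e^{Br_2}$ stays as a separate finite constant after the optimal choice, and $e^{B|\mathbf z|}$ came out front from the very first estimate on the torus.
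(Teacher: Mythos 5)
Your proof is correct and follows essentially the same route as the paper: the bivariate Cauchy integral formula on a polydisc, the bound $|\boldsymbol\zeta|\leq|\mathbf z|+r_1+r_2$ on the torus, and optimization of the radii (you take the true minimizer $r_i=n_i/B$, the paper takes $r_i=(n_i!)^{1/n_i}/B$; both then invoke $n!\leq n^n$ to land on $(eB)^{n_1+n_2}$). Your handling of the degenerate cases $n_i=0$ by letting $r_i\to 0^+$ is also fine, where the paper instead falls back on a one-variable Cauchy formula.
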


\begin{proof}
Let us first consider the case $n_1,n_2\neq 0$. By the Cauchy-formula we can represent the derivative $\frac{\partial^{n_1+n_2}}{\partial z_1^{n_1}z_2^{n_2}}F(\mathbf{z})$, for every fixed $\mathbf{z}\in\mathbb{C}^2$ by the integral
\begin{equation}\label{Eq_F_derivative_estimate_2}
\frac{\partial^{n_1+n_2}}{\partial z_1^{n_1}z_2^{n_2}}F(\mathbf{z})=\frac{n_1!n_2!}{(2\pi i)^2}\int_{|\xi_2-z_2|=r_2}\int_{|\xi_1-z_1|=r_1}\frac{F\vvect{\xi_1}{\xi_2}}{(\xi_1-z_1)^{n_1+1}(\xi_2-z_2)^{n_2+1}}d\xi_1d\xi_2,
\end{equation}
where $r_1,r_2>0$ for the moment arbitrary and will be specified later. Hence we can estimate the derivative by
\begin{align}
\Big|\frac{\partial^{n_1+n_2}}{\partial z_1^{n_1}z_2^{n_2}}F(\mathbf{z})\Big|&=\frac{n_1!n_2!}{4\pi^2}\bigg|\int_0^{2\pi}\int_0^{2\pi}\frac{F\vvect{z_1+r_1e^{i\varphi_1}}{z_2+r_2e^{i\varphi_2}}}{(r_1e^{i\varphi_1})^{n_1}(r_2e^{i\varphi_2})^{n_2}}d\varphi_1d\varphi_2\bigg| \notag \\
&\leq\frac{An_1!n_2!}{4\pi^2r_1^{n_1}r_2^{n_2}}\int_0^{2\pi}\int_0^{2\pi}e^{B\sqrt{|z_1+r_1e^{i\varphi_1}|^2+|z_2+r_2e^{i\varphi_2}|^2}}d\varphi_1d\varphi_2 \notag \\
&\leq\frac{An_1!n_2!}{r_1^{n_1}r_2^{n_2}}e^{B\sqrt{(|z_1|+r_1)^2+(|z_2|+r_2)^2}} \notag \\
&\leq\frac{An_1!n_2!}{r_1^{n_1}r_2^{n_2}}e^{B\sqrt{|z_1|^2+|z_2|^2}+B\sqrt{r_1^2+r_2^2}}. \label{Eq_F_derivative_estimate_1}
\end{align}
Assuming now for the moment $z_1,z_2\neq 0$. Then we choose the radii $r_1,r_2$ as
\begin{equation}\label{Eq_r_definition}
r_1:=\frac{(n_1!)^{\frac{1}{n_1}}}{B}\qquad\text{and}\qquad r_2:=\frac{(n_2!)^{\frac{1}{n_2}}}{B}.
\end{equation}
Using the inequality $n!\leq n^n$ for every $n\geq 1$, we can then estimate these radii by
\begin{equation}\label{Eq_r_estimate}
r_1\leq\frac{n_1}{B}\qquad\text{and}\qquad r_2\leq\frac{n_2}{B}.
\end{equation}
Using now the values \eqref{Eq_r_definition} of $r_1$ and $r_2$ in the denomenator of \eqref{Eq_F_derivative_estimate_1} and the estimates \eqref{Eq_r_estimate} in the exponent of \eqref{Eq_F_derivative_estimate_1}, leads to the stated estimate
\begin{equation*}
\Big|\frac{\partial^{n_1+n_2}}{\partial z_1^{n_1}z_2^{n_2}}F(\mathbf{z})\Big|\leq A(eB)^{n_1+n_2}e^{B(|z_1|+|z_2|)},
\end{equation*}
whenever $z_1,z_2\neq 0$. However, since both sides of this inequality are continuous functions in $z_1,z_2$, it can be extended to every $z_1,z_2\in\mathbb{C}$ by continuity. The case $n_1=0$ and/or $n_2=0$ follows the same calculations with the choice $r_1=1$ and/or $r_2=1$ in \eqref{Eq_F_derivative_estimate_1}.
er reduces to
\begin{equation*}
\Big|\frac{\partial^{n_1+n_2}}{\partial z_1^{n_1}z_2^{n_2}}F(\mathbf{z})\Big|\leq A(eB)^{n_1+n_2}e^{B|\mathbf{z}|}.
\end{equation*}
For the case where both $n_1=n_2=0$ vanish, the estimate \eqref{Eq_F_derivative_estimate} is trivial. If exactly one of the numbers $n_1,n_2$ vanish, lets say $n_1\neq 0$ and $n_2=0$, we can do a similar computation, only replacing the Cauchy-formula \eqref{Eq_F_derivative_estimate_2} by
\begin{equation*}
\frac{\partial^{n_1}}{\partial z_1^{n_1}}F(\mathbf{z})=\frac{n_1!}{2\pi i}\int_{|\xi_1-z_1|=r_1}\frac{F\vvect{\xi_1}{z_2}}{(\xi_1-z_1)^{n_1+1}}d\xi_1. \qedhere
\end{equation*}
\end{proof}

Next we prove the main property of the operator $U_\DN(t,\mathbf{x})$, being a continuous operator in the space $\mathcal{A}_1(\mathbb{C})$.

\begin{satz}\label{satz_U_continuity}
For every fixed $t>0$, $\mathbf{x}\in\Omega$, the operator $U_\DN(t,\mathbf{x}):\mathcal{A}_1(\mathbb{C}^2)\rightarrow\mathcal{A}_1(\mathbb{C}^2)$ is continuous. Moreover, there exists some constant $C(t,\mathbf{x})\geq 0$, continuously dependin on $t$ and $\mathbf{x}$, such that
\begin{equation}\label{Eq_U_continuity}
|U_\DN(t,\mathbf{x})F(z)|\leq AC(t,\mathbf{x})e^{B|\mathbf{z}|},\qquad\mathbf{z}\in\mathbb{C}^2,
\end{equation}
whenever $F\in\mathcal{A}_1(\mathbb{C}^2)$ satisfies $|F(\mathbf{z})|\leq Ae^{B|\mathbf{z}|}$ for some $A\geq 0$, $B>0$.
\end{satz}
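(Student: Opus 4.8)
The plan is to reduce the continuity statement to the exponential estimate \eqref{Eq_U_continuity}. Indeed, once we know that $U_\DN(t,\mathbf{x})$ maps into $\mathcal{A}_1(\mathbb{C}^2)$ with a bound of the form $AC(t,\mathbf{x})e^{B|\mathbf{z}|}$ whenever $|F|\le Ae^{B|\mathbf{z}|}$, linearity of $U_\DN(t,\mathbf{x})$ immediately gives continuity with respect to the convergence \eqref{Eq_A1_convergence}: if $F_n\to F_0$ in $\mathcal{A}_1(\mathbb{C}^2)$, then $F_n-F_0$ satisfies $|F_n(\mathbf{z})-F_0(\mathbf{z})|\le A_ne^{B|\mathbf{z}|}$ with $A_n\to 0$ for a fixed $B$, hence $|U_\DN(t,\mathbf{x})(F_n-F_0)(\mathbf{z})|\le A_nC(t,\mathbf{x})e^{B|\mathbf{z}|}\to 0$ uniformly after multiplication by $e^{-B|\mathbf{z}|}$. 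So the real content is the estimate \eqref{Eq_U_continuity}.

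To prove \eqref{Eq_U_continuity}, I would first bound the coefficients $c_{n_1,n_2}(t,\mathbf{x})$ defined in \eqref{Eq_c}. Using the decomposition \eqref{Eq_Gtilde} of the Green's function together with the estimate \eqref{Eq_Gtilde_estimate} for $\widetilde{G}_\DN$, and the fact that $\big|e^{-\frac{(\rho e^{i\alpha})^2}{4it}}\big| = e^{-\frac{\rho^2\cos(2\alpha)}{4t}}$ decays like a Gaussian (this is where $\alpha\in(0,\tfrac{\pi}{2})$ enters, ensuring $\cos(2\alpha)$ could be negative — so one should rather fix a convenient $\alpha$, say close to $0$, or more carefully split: for $\alpha\in(0,\tfrac\pi4)$ one has $\cos(2\alpha)>0$), one gets
\begin{equation*}
|c_{n_1,n_2}(t,\mathbf{x})| \le \frac{1}{n_1!n_2!}\cdot\frac{2\pi}{2\pi t}\int_0^\infty e^{-\frac{\rho^2\cos(2\alpha)}{4t}}e^{\frac{3r\rho}{2t}}\rho^{n_1+n_2+1}d\rho.
\end{equation*}
The Gaussian integral $\int_0^\infty e^{-a\rho^2+b\rho}\rho^m\,d\rho$ can be bounded, after completing the square and substituting, by something like $C(t,\mathbf{x})^{\,m}\,\Gamma(\tfrac{m}{2}+1)$, which grows no faster than $D(t,\mathbf{x})^{n_1+n_2}\sqrt{(n_1+n_2)!}$ for suitable $D(t,\mathbf{x})$. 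Combined with the $\frac{1}{n_1!n_2!}$ prefactor and the multinomial-type bound $\sqrt{(n_1+n_2)!}\le\sqrt{2^{n_1+n_2}n_1!n_2!}$, this yields
\begin{equation*}
|c_{n_1,n_2}(t,\mathbf{x})| \le \frac{C_1(t,\mathbf{x})\,D_1(t,\mathbf{x})^{n_1+n_2}}{\sqrt{n_1!n_2!}}
\end{equation*}
for constants depending continuously on $t,\mathbf{x}$.

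Then I would combine this with the derivative estimate \eqref{Eq_F_derivative_estimate}: for $F$ with $|F(\mathbf{z})|\le Ae^{B|\mathbf{z}|}$,
\begin{equation*}
|U_\DN(t,\mathbf{x})F(\mathbf{z})| \le \sum_{n_1,n_2=0}^\infty |c_{n_1,n_2}(t,\mathbf{x})|\cdot A(eB)^{n_1+n_2}e^{B|\mathbf{z}|} \le AC_1(t,\mathbf{x})e^{B|\mathbf{z}|}\sum_{n_1,n_2=0}^\infty \frac{\big(D_1(t,\mathbf{x})\,eB\big)^{n_1+n_2}}{\sqrt{n_1!n_2!}}.
\end{equation*}
The double series $\sum_{n_1,n_2}\frac{x^{n_1+n_2}}{\sqrt{n_1!n_2!}} = \big(\sum_{n\ge0}\frac{x^n}{\sqrt{n!}}\big)^2$ converges for every $x\ge 0$ (since $\frac{x^n}{\sqrt{n!}}\to 0$ superexponentially), so it sums to a finite quantity $C_2(t,\mathbf{x})$ depending continuously on $t,\mathbf{x}$ through $D_1$. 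Setting $C(t,\mathbf{x}):=C_1(t,\mathbf{x})C_2(t,\mathbf{x})$ gives \eqref{Eq_U_continuity}, and the absolute convergence of the series also retroactively justifies the formal interchange of sum and integral in \eqref{Eq_U_derivation}. The main obstacle I anticipate is bookkeeping the $\rho$-integral carefully enough: one needs the $\rho^{n_1+n_2+1}$ growth to be controlled by the Gaussian $e^{-c\rho^2}$ in a way that produces only $\sqrt{(n_1+n_2)!}$ (not $(n_1+n_2)!$), since a factorial that large would not be killed by the $\frac{1}{n_1!n_2!}$ coefficient after the multinomial split — getting the right power of the factorial, and tracking that all constants depend continuously (not, say, blowing up) as $t\to 0^+$ is excluded or $\mathbf{x}$ varies, is the delicate point.
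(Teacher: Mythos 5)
Your plan follows essentially the same route as the paper's proof: bound the coefficients $c_{n_1,n_2}(t,\mathbf{x})$ of \eqref{Eq_c} via the Gaussian decay of the Fresnel-rotated Green's function, observe that the resulting $\rho$-integral contributes only a factor of order $\Gamma(\tfrac{n_1+n_2}{2}+1)\sim\sqrt{(n_1+n_2)!}$, split this against $n_1!\,n_2!$, combine with the derivative bound \eqref{Eq_F_derivative_estimate}, and sum a series of the type $\sum_n x^n/\sqrt{n!}$ (the paper organizes this via $1/\Gamma(\tfrac{n+1}{2})$ and the Mittag-Leffler function $E_{\frac12,\frac12}$, which amounts to the same estimate), with continuity following from linearity exactly as you describe. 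One correction: the modulus of the rotated Gaussian factor is $\big|e^{-(\rho e^{i\alpha})^2/(4it)}\big|=e^{-\rho^2\sin(2\alpha)/(4t)}$, not $e^{-\rho^2\cos(2\alpha)/(4t)}$ --- the $i$ in the denominator shifts the relevant real part to $-\sin(2\alpha)$ --- so the integrand has Gaussian decay for \emph{every} $\alpha\in(0,\tfrac{\pi}{2})$ and your proposed restriction to $\alpha<\tfrac{\pi}{4}$ is unnecessary.
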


\begin{proof}
Using the decomposition \eqref{Eq_Gtilde} of the Green's function $G_\DN(t,\mathbf{x},\mathbf{z})$ and the estimate \eqref{Eq_Gtilde_estimate} of the reduced Green's function $\widetilde{G}(t,\mathbf{x},\mathbf{z})$, we can estimate the coefficients \eqref{Eq_c} by
\begin{align}
|c_{n_1,n_2}(t,\mathbf{x})|&\leq\frac{1}{2\pi tn_1!n_2!}\int_0^\infty\int_{-\frac{\pi}{2}}^{\frac{3\pi}{2}}e^{-\frac{\rho^2\sin(2\alpha)}{4t}}e^{\frac{3r\rho}{2t}}|\cos\theta|^{n_1}|\sin\theta|^{n_2}\rho^{n_1+n_2+1}d\theta d\rho \notag \\
&\leq\frac{1}{tn_1!n_2!}\int_0^\infty e^{-\frac{\rho^2\sin(2\alpha)}{4t}}e^{\frac{3r\rho}{2t}}\rho^{n_1+n_2+1}d\rho \notag \\
&=\frac{1}{tn_1!n_2!}\Big(\frac{8t}{\sin(2\alpha)}\Big)^{\frac{n_1+n_2+2}{2}}\int_0^\infty e^{-2\rho^2+\frac{3\sqrt{2}\,r\rho}{\sqrt{t\sin(2\alpha)}}}\rho^{n_1+n_2+1}d\rho \notag \\
&\leq\frac{1}{tn_1!n_2!}\Big(\frac{8t}{\sin(2\alpha)}\Big)^{\frac{n_1+n_2+2}{2}}e^{\frac{9r^2}{2t\sin(2\alpha)}}\int_0^\infty e^{-\rho^2}\rho^{n_1+n_2+1}d\rho \notag \\
&=\frac{\Gamma(\frac{n_1+n_2+2}{2})}{2tn_1!n_2!}\Big(\frac{8t}{\sin(2\alpha)}\Big)^{\frac{n_1+n_2+2}{2}}e^{\frac{9r^2}{2t\sin(2\alpha)}} \notag \\
&\leq\frac{\pi^2}{2t\Gamma(\frac{n_1+1}{2})\Gamma(\frac{n_2+1}{2})}\Big(\frac{16t}{\sin(2\alpha)}\Big)^{\frac{n_1+n_2+2}{2}}e^{\frac{9r^2}{2t\sin(2\alpha)}}, \label{Eq_c_estimate}
\end{align}
where we used the estimates
\begin{align*}
&\Gamma(a+b+1)\leq 2^{a+b+1}\Gamma\Big(a+\frac{1}{2}\Big)\Gamma\Big(b+\frac{1}{2}\Big)\quad\text{and} \\
&\frac{\Gamma(\frac{a+1}{2})^2}{\Gamma(a+1)}=B\Big(\frac{a+1}{2},\frac{a+1}{2}\Big)\leq B\Big(\frac{1}{2},\frac{1}{2}\Big)=\pi,
\end{align*}
of the $\Gamma$-function, which are true for every $a,b\geq 0$. Hence, if we assume that $|F(\mathbf{z})|\leq A^{B|\mathbf{z}|}$, we can use the estimate \eqref{Eq_F_derivative_estimate} of the derivatives of $F$ to estimate the action of the operator $U_\DN$ by
\begin{align*}
\big|U_\DN(t,\mathbf{x})F(\mathbf{z})\big|&=\bigg|\sum\limits_{n_1,n_2=0}^\infty c_{n_1,n_2}(t,\mathbf{x})\frac{\partial^{n_1+n_2}}{\partial z_1^{n_1}\partial z_2^{n_2}}F(\mathbf{z})\bigg| \\
&\leq\frac{8A\pi^2}{\sin(2\alpha)}e^{\frac{9r^2}{2t\sin(2\alpha)}}\sum\limits_{n_1,n_2=0}^\infty\frac{1}{\Gamma(\frac{n_1+1}{2})\Gamma(\frac{n_2+1}{2})}\Big(\frac{4eB\sqrt{t}}{\sqrt{\sin(2\alpha)}}\Big)^{n_1+n_2}e^{B|\mathbf{z}|} \\
&=\frac{8A\pi^2}{\sin(2\alpha)}e^{\frac{9r^2}{2t\sin(2\alpha)}}E_{\frac{1}{2},\frac{1}{2}}\Big(\frac{4eB\sqrt{t}}{\sqrt{\sin(2\alpha)}}\Big)^2e^{B|\mathbf{z}|},\qquad\mathbf{z}\in\mathbb{C}^2.
\end{align*}
Hence $U_\DN(t,\mathbf{x})F\in\mathcal{A}_1(\mathbb{C}^2)$ and the inequality \eqref{Eq_U_continuity} is satisfied with the constant
\begin{equation*}
C(t,\mathbf{x})=\frac{8\pi^2}{\sin(2\alpha)}e^{\frac{9r^2}{2t\sin(2\alpha)}}E_{\frac{1}{2},\frac{1}{2}}\Big(\frac{4eB\sqrt{t}}{\sqrt{\sin(2\alpha)}}\Big)^2.
\end{equation*}
For the proof of the continuity let $F,(F_n)_{n\in\mathbb{N}}\in\mathcal{A}_1(\mathbb{C}^2)$ such that $\lim\limits_{n\rightarrow\infty}F_n=F$ in $\mathcal{A}_1(\mathbb{C}^2)$. By \eqref{Eq_A1_convergence} this means that there exists some $B\geq 0$ such that
\begin{equation*}
A_n:=\sup\limits_{z\in\mathbb{C}^2}|F_n(z)-F(z)|e^{-B|z|}\overset{n\rightarrow\infty}{\longrightarrow}0.
\end{equation*}
With this constants $A_n$, the difference admits the estimate $|F_n(\mathbf{z})-F(\mathbf{z})|\leq A_ne^{B|\mathbf{z}|}$ and using \eqref{Eq_U_continuity}, we get
\begin{equation}\label{Eq_Continuity_estimate}
\sup\limits_{\mathbf{z}\in\mathbb{C}^2}\big|U_\DN(t,\mathbf{x})F_n(\mathbf{z})-U_\DN(t,\mathbf{x})F(\mathbf{z})\big|e^{-B|\mathbf{z}|}\leq A_nC(t,\mathbf{x})\overset{n\rightarrow\infty}{\longrightarrow}0.
\end{equation}
This proves the convergence $\lim_{n\rightarrow\infty}U_\DN(t,\mathbf{x})F_n=U_\DN(t,\mathbf{x})F$ in $\mathcal{A}_1(\mathbb{C}^2)$ and hence the continuity of $U_\DN(t,\mathbf{x})$.
\end{proof}

The next lemma uses the operator $U_\DN(t,\mathbf{x})$ to prove the continuous dependency of the solution $\Psi_\DN$ from the initial value $F$. In order to emphasize the initial value, we will use the notation $\Psi_\DN(t,\mathbf{x};F)$ in the following.

\begin{cor}\label{cor_Continuous_dependency}
Let $F,(F_n)_{n\in\mathbb{N}}\in\mathcal{A}_1(\mathbb{C}^2)$ be such that $\lim_{n\rightarrow\infty}F_n=F$ in $\mathcal{A}_1(\mathbb{C}^2)$. Then

\begin{enumerate}
\item[i)] $\Psi_\DN(t,\mathbf{x};F)=U_\DN(t,\mathbf{x})F(\mathbf{z})\big|_{\mathbf{z}=\mathbf{0}}$,
\item[ii)] $\lim\limits_{n\rightarrow\infty}\Psi_\DN(t,\mathbf{x};F_n)=\Psi_\DN(t,\mathbf{x};F)$\quad uniformly on compact subsets of $(0,\infty)\times\Omega$.
\end{enumerate}
\end{cor}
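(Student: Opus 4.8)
The plan is to make the formal computation in \eqref{Eq_U_derivation} rigorous, and then to read off both statements from the continuity of $U_\DN(t,\mathbf{x})$ established in Theorem~\ref{satz_U_continuity}. For part i), I would start from the absolutely convergent integral representation \eqref{Eq_Psi_Fresnel} of Theorem~\ref{satz_Psi_Fresnel}, insert the power series $F(\mathbf{z})=\sum_{n_1,n_2}\frac{\partial_{z_1}^{n_1}\partial_{z_2}^{n_2}F(\mathbf{0})}{n_1!n_2!}z_1^{n_1}z_2^{n_2}$ evaluated at $\mathbf{z}=\rho e^{i\alpha}\vvect{\cos\theta}{\sin\theta}$, and justify interchanging the summation with the $\theta$- and $\rho$-integrals by dominated convergence. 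The dominating function is supplied by combining the Green's function bound \eqref{Eq_Gtilde_estimate}, the derivative bound \eqref{Eq_F_derivative_estimate} (with constants $A,B$ for $F$), and the Gaussian factor $e^{-\rho^2\sin(2\alpha)/4t}$ coming from the $e^{-z^2/4it}$ term in \eqref{Eq_Gtilde}; these are exactly the estimates already carried out in the proof of Theorem~\ref{satz_U_continuity} to show $\sum_{n_1,n_2}|c_{n_1,n_2}(t,\mathbf{x})|\,(eB)^{n_1+n_2}<\infty$. Once the interchange is justified, the integrals collapse into the coefficients $c_{n_1,n_2}(t,\mathbf{x})$ of \eqref{Eq_c} and one obtains $\Psi_\DN(t,\mathbf{x};F)=\sum_{n_1,n_2}c_{n_1,n_2}(t,\mathbf{x})\,\partial_{z_1}^{n_1}\partial_{z_2}^{n_2}F(\mathbf{0})=U_\DN(t,\mathbf{x})F(\mathbf{z})\big|_{\mathbf{z}=\mathbf{0}}$, which is i).

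For part ii), the convergence $F_n\to F$ in $\mathcal{A}_1(\mathbb{C}^2)$ gives, by \eqref{Eq_A1_convergence}, a single $B\geq 0$ and constants $A_n:=\sup_{\mathbf{z}}|F_n(\mathbf{z})-F(\mathbf{z})|e^{-B|\mathbf{z}|}\to 0$, so that $|F_n(\mathbf{z})-F(\mathbf{z})|\leq A_ne^{B|\mathbf{z}|}$. Applying i) to $F_n$ and to $F$, together with linearity of $U_\DN(t,\mathbf{x})$, yields
\[
\Psi_\DN(t,\mathbf{x};F_n)-\Psi_\DN(t,\mathbf{x};F)=U_\DN(t,\mathbf{x})\big(F_n-F\big)(\mathbf{z})\big|_{\mathbf{z}=\mathbf{0}},
\]
and the estimate \eqref{Eq_U_continuity} of Theorem~\ref{satz_U_continuity}, evaluated at $\mathbf{z}=\mathbf{0}$, gives
\[
\big|\Psi_\DN(t,\mathbf{x};F_n)-\Psi_\DN(t,\mathbf{x};F)\big|\leq A_nC(t,\mathbf{x}).
\]
Since $A_n\to 0$ is independent of $(t,\mathbf{x})$ and $C(t,\mathbf{x})$ depends continuously on $(t,\mathbf{x})$ (hence is bounded on compact subsets of $(0,\infty)\times\Omega$), the right-hand side tends to $0$ uniformly on such compacta, which is exactly ii).

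The main obstacle is the rigorous justification of the interchange of sum and integral in part i); everything else is a direct application of Theorem~\ref{satz_U_continuity}. The point to be careful about is that the power series of $F$ must be dominated uniformly in $\theta\in[-\tfrac{\pi}{2},\tfrac{3\pi}{2}]$ and in $\rho\in(0,\infty)$ after the Fresnel rotation by $e^{i\alpha}$. Writing $|z_1^{n_1}z_2^{n_2}|=\rho^{n_1+n_2}|\cos\theta|^{n_1}|\sin\theta|^{n_2}\leq\rho^{n_1+n_2}$ and bounding $|\partial_{z_1}^{n_1}\partial_{z_2}^{n_2}F(\mathbf{0})|\leq A(eB)^{n_1+n_2}$ via \eqref{Eq_F_derivative_estimate}, the tail of the series is dominated by $\sum_{n_1,n_2}\frac{A(eB\rho)^{n_1+n_2}}{n_1!n_2!}=Ae^{2eB\rho}$, which against the factor $|\widetilde{G}_\DN|\,e^{-\rho^2\sin(2\alpha)/4t}\rho\lesssim \frac{1}{t}e^{3r\rho/2t-\rho^2\sin(2\alpha)/4t}\rho$ from \eqref{Eq_Gtilde_estimate} and \eqref{Eq_Gtilde} produces an integrable majorant on $[-\tfrac{\pi}{2},\tfrac{3\pi}{2}]\times(0,\infty)$; this legitimises Fubini–Tonelli and the term-by-term integration.
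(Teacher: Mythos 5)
Your proposal is correct and follows essentially the same route as the paper: part i) is obtained by inserting the Taylor series of $F$ into the Fresnel representation \eqref{Eq_Psi_Fresnel} and justifying the interchange of sum and integral via the majorant built from \eqref{Eq_Gtilde_estimate}, \eqref{Eq_F_derivative_estimate} and the Gaussian factor $e^{-\rho^2\sin(2\alpha)/4t}$, while part ii) follows from linearity and the bound \eqref{Eq_U_continuity} evaluated at $\mathbf{z}=\mathbf{0}$ together with the continuity of $C(t,\mathbf{x})$. Your explicit summation of the majorant to $Ae^{2eB\rho}$ is a slightly more detailed presentation of the same absolute-convergence argument the paper uses.
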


\begin{proof}
The representation i) follows from the calculations in \eqref{Eq_U_derivation}, where the differentiation $\frac{\partial^{n_1+n_2}}{\partial z_1^{n_1}\partial z_2^{n_2}}$ and summation $\sum_{n_1,n_2=0}^\infty$ can be carried outside the integral due to the estimate
\begin{align*}
\bigg|G_\DN\big(t,\mathbf{x},\rho e^{i\alpha}\vvect{\cos\theta}{\sin\theta}\big)\frac{\partial_{z_1}^{n_1}\partial_{z_2}^{n_2}F(\mathbf{0})}{n_1!n_2!}&(\rho e^{i\alpha}\cos\theta)^{n_1}(\rho e^{i\alpha}\sin\theta)^{n_2}\rho\bigg| \\
&\leq\frac{A(eB)^{n_1+n_2}}{2\pi tn_1!n_2!}e^{-\frac{\rho^2\sin(2\alpha)}{4t}}e^{\frac{3r\rho}{2t}}\rho^{n_1+n_2+1},
\end{align*}
following from \eqref{Eq_Gtilde_estimate} and \eqref{Eq_F_derivative_estimate}, which makes
\begin{equation*}
\sum\limits_{n_1,n_2=0}^\infty\int_0^\infty\int_{-\frac{\pi}{2}}^{\frac{3\pi}{2}}\bigg|G_\DN\big(t,\mathbf{x},\rho e^{i\alpha}\vvect{\cos\theta}{\sin\theta}\big)\frac{\partial_{z_1}^{n_1}\partial_{z_2}^{n_2}F(\mathbf{0})}{n_1!n_2!}(\rho e^{i\alpha}\cos\theta)^{n_1}(\rho e^{i\alpha}\sin\theta)^{n_2}\rho\bigg|d\varphi d\rho<\infty
\end{equation*}
absolute convergent. In order to prove the convergence in ii), we note that by i) and \eqref{Eq_Continuity_estimate} we get
\begin{equation*}
\big|\Psi_\DN(t,\mathbf{x};F_n)-\Psi_\DN(t,\mathbf{x};F)\big|
=\Big|U_\DN(t,\mathbf{x})(F_n(\mathbf{z})-F(\mathbf{z}))\Big|_{\mathbf{z}=\mathbf{0}}\leq A_nC(t,\mathbf{x}),
\end{equation*}
using the coefficients 
$$
A_n:=\sup_{z\in\mathbb{C}^2}|F_n(z)-F(z)|e^{-B|z|}\overset{n\rightarrow\infty}{\longrightarrow}0.
$$
 Since the coefficient $C(t,\mathbf{x})$ is moreover continuous by Theorem \ref{satz_U_continuity}, the convergence 
 $$
 \lim_{n\rightarrow\infty}\Psi_\DN(t,\mathbf{x};F_n)=\Psi_\DN(t,\mathbf{x};F)
 $$
  is uniform on compact subsets of $(0,\infty)\times\Omega$.
\end{proof}

\section{Time persistence of superoscillations and  supershift}\label{sec_Time_persistence_of_superoscillations}

In this section, we investigate the evolution of superoscillating functions as initial conditions in the Schrödinger equation \eqref{Eq_Schroedinger_equation}. The expectation that for any sequence $(F_n)_n$ of superoscillating initial conditions, the sequence of solutions $\Psi_\DN(t,\mathbf{x};F_n)$ will again be superoscillating (for fixed times $t>0$), can easily be negated. In fact, one way of reasoning is that the convergence \eqref{Eq_Fn_convergence} of the initial conditions $F_n$ to a plane wave $e^{iax}$ implies that the solutions
 converge as
\begin{equation*}
\lim\limits_{n\rightarrow\infty}\Psi_\DN(t,\mathbf{x};F_n)=\Psi_\DN\big(t,\mathbf{x};e^{i\mathbf{a}\,\cdot\,}\big),
\end{equation*}
see, e.g., the continuous dependency result in Corollary~\ref{cor_Continuous_dependency}~ii). However, for $\Psi_\DN(t,\mathbf{x};F_n)$ to be superoscillating, the limit function $\Psi_\DN(t,\mathbf{x};e^{i\mathbf{a}\,\cdot\,})$ has to be plane wave due to the definition in \eqref{Eq_Fn_convergence}, which is not possible since 
 the boundary condition forces the wave function (or its derivative) 
 to vanish on $\Gamma$. 
 Moreover, the solution $\Psi_\DN(t,\mathbf{x};e^{i\mathbf{a}\,\cdot\,})$ will no longer be a holomorphic function in the $\mathbf{x}$-variable and hence no element in the space $\mathcal{A}_1(\mathbb{C}^2)$. These considerations show that the precise mathematical notion of superoscillations is too narrow to persist in time.

\medskip

This motivates the following notion of \textit{supershift}, which basically is a replacement of the holomorphic exponentials $e^{i\mathbf{ax}}$ by arbitrary continuous functions $\varphi_\mathbf{a}(\mathbf{x})$ in the Definition \ref{defi_Superoscillations} of superoscillations.

\begin{defi}[Supershift]\label{defi_Supershift}
Let $X$ be a metric space and
\begin{equation}\label{Eq_varphik}
\varphi_\mathbf{k}:X\rightarrow\mathbb{C},\qquad\mathbf{k}\in\mathbb{C}^2,
\end{equation}
be a family of complex valued functions such that $\mathbf{k}\mapsto\varphi_\mathbf{k}(s)$ is continuous for every $s\in X$. We say that a sequence of the form
\begin{equation}\label{Eq_Phin_integral}
\Phi_n(s):=\int_{|\mathbf{k}|\leq k_0}\varphi_\mathbf{k}(s)d\mu_n(\mathbf{k}),\qquad s\in X,
\end{equation}
for some $k_0>0$ and complex Borel measures $\mu_n$ on the closed ball $\overline{B_{k_0}(0)}\subseteq\mathbb{C}^2$, admits a \textit{supershift}, if there exists some $\mathbf{a}\in\mathbb{C}^2$ with $|\mathbf{a}|>k_0$, such that
\begin{equation}\label{Eq_Phin_convergence}
\lim\limits_{n\rightarrow\infty}\Phi_n(s)=\varphi_\mathbf{a}(s),\qquad s\in X,
\end{equation}
converges uniformly on compact subsets of $X$.
\end{defi}

\begin{bem}
With the special choice $X=\mathbb{C}^2$ and $\varphi_\mathbf{k}(\mathbf{z})=e^{i\mathbf{kz}}$, it turns out that the notion of supershift in Definition~\ref{defi_Supershift} is a generalization of the notion of superoscillations in Definition~\ref{defi_Superoscillations}. Indeed, with this choice the integrals \eqref{Eq_Fn_integral} and \eqref{Eq_Phin_integral} coincide and since the $\mathcal{A}_1$-convergence \eqref{Eq_A1_convergence} is stronger than the convergence on compact sets, the convergence \eqref{Eq_Fn_convergence} implies the convergence \eqref{Eq_Phin_convergence}.
\end{bem}

In the following theorem we now prove that after the interaction of superoscillations with the half-plane barrier, the superoscillatory property turns into a supershift property, which then persists for all times $t>0$.

\begin{satz}
Let $(F_n)_{n\in\mathbb{N}}$ be a superoscillating sequence according to Definition~\ref{defi_Superoscillations}, i.e.
\begin{equation}\label{Eq_Fn_time_persistence}
F_n(\mathbf{z})=\int_{|\mathbf{k}|\leq k_0}e^{i\mathbf{kz}}d\mu_n(\mathbf{k})\overset{n\rightarrow\infty}{\longrightarrow}e^{i\mathbf{az}},\quad\text{in }\mathcal{A}_1(\mathbb{C}^2).
\end{equation}
Then the sequence $\Psi_DN(t,\mathbf{x};F_n)$, $n\in\mathbb{N}$ of solutions admits a supershift according to Definition~\ref{defi_Supershift}. In particular we have
\begin{equation}\label{Eq_Psi_convergence}
\Psi_\DN(t,\mathbf{x};F_n)=\int_{|\mathbf{k}|\leq k_0}\Psi_\DN\big(t,\mathbf{x};e^{i\mathbf{k}\,\cdot\,}\big)d\mu_n(\mathbf{k})\overset{n\rightarrow\infty}{\longrightarrow}\Psi_\DN\big(t,\mathbf{x},e^{i\mathbf{a}\,\cdot\,}\big),
\end{equation}
where the convergence is uniform on any compact subset of $(0,\infty)\times\Omega$.
\end{satz}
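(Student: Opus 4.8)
The plan is to exploit the two structural facts already established: the operator representation $\Psi_\DN(t,\mathbf{x};F)=U_\DN(t,\mathbf{x})F(\mathbf{z})|_{\mathbf{z}=\mathbf{0}}$ from Corollary~\ref{cor_Continuous_dependency}~i), and the continuity of $U_\DN(t,\mathbf{x})$ on $\mathcal{A}_1(\mathbb{C}^2)$ from Theorem~\ref{satz_U_continuity}. The target family in Definition~\ref{defi_Supershift} is $X=(0,\infty)\times\Omega$ (a metric space), $s=(t,\mathbf{x})$, and $\varphi_\mathbf{k}(t,\mathbf{x}):=\Psi_\DN(t,\mathbf{x};e^{i\mathbf{k}\,\cdot\,})$, i.e. the solution of the Schr\"odinger problem with the plane-wave initial datum $e^{i\mathbf{kz}}$. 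The conclusion \eqref{Eq_Psi_convergence} then says precisely that $\Phi_n(s):=\Psi_\DN(t,\mathbf{x};F_n)$ has the integral form \eqref{Eq_Phin_integral} with respect to the same measures $\mu_n$ and converges to $\varphi_\mathbf{a}(s)$ uniformly on compacta.

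First I would verify the integral representation in \eqref{Eq_Psi_convergence}. Since $e^{i\mathbf{k}\,\cdot\,}\in\mathcal{A}_1(\mathbb{C}^2)$ for each $\mathbf{k}$ with $|\mathbf{k}|\leq k_0$, applying the linear operator $U_\DN(t,\mathbf{x})$ to $F_n(\mathbf{z})=\int_{|\mathbf{k}|\leq k_0}e^{i\mathbf{kz}}d\mu_n(\mathbf{k})$ and evaluating at $\mathbf{z}=\mathbf{0}$ gives $\Psi_\DN(t,\mathbf{x};F_n)=\int_{|\mathbf{k}|\leq k_0}U_\DN(t,\mathbf{x})e^{i\mathbf{k}\,\cdot\,}(\mathbf{z})|_{\mathbf{z}=\mathbf{0}}\,d\mu_n(\mathbf{k})=\int_{|\mathbf{k}|\leq k_0}\Psi_\DN(t,\mathbf{x};e^{i\mathbf{k}\,\cdot\,})\,d\mu_n(\mathbf{k})$. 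To move $U_\DN(t,\mathbf{x})$ inside the $\mathbf{k}$-integral one interchanges the $\mathbf{k}$-integration with the $\sum_{n_1,n_2}$ and the derivatives $\partial_{z_1}^{n_1}\partial_{z_2}^{n_2}$ defining $U_\DN$; this is justified by the absolute-convergence bound used in the proof of Corollary~\ref{cor_Continuous_dependency}~i) together with the fact that $|\mu_n|(\overline{B_{k_0}(\mathbf{0})})<\infty$ and $|\partial_{z_1}^{n_1}\partial_{z_2}^{n_2}e^{i\mathbf{kz}}|=|\mathbf{k}|^{n_1}\cdots\leq k_0^{n_1+n_2}$, so Fubini/Tonelli and dominated convergence apply uniformly in $\mathbf{k}$.

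Next I would establish the convergence. By hypothesis \eqref{Eq_Fn_time_persistence}, $F_n\to e^{i\mathbf{a}\,\cdot\,}$ in $\mathcal{A}_1(\mathbb{C}^2)$; by the continuity statement of Theorem~\ref{satz_U_continuity}, this gives $U_\DN(t,\mathbf{x})F_n\to U_\DN(t,\mathbf{x})e^{i\mathbf{a}\,\cdot\,}$ in $\mathcal{A}_1(\mathbb{C}^2)$, and evaluating at $\mathbf{z}=\mathbf{0}$ yields $\Psi_\DN(t,\mathbf{x};F_n)\to\Psi_\DN(t,\mathbf{x};e^{i\mathbf{a}\,\cdot\,})=\varphi_\mathbf{a}(t,\mathbf{x})$ pointwise in $(t,\mathbf{x})$. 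For the uniformity on compact subsets of $(0,\infty)\times\Omega$ I would reuse the quantitative bound \eqref{Eq_Continuity_estimate}: with $A_n:=\sup_{\mathbf{z}}|F_n(\mathbf{z})-e^{i\mathbf{az}}|e^{-B|\mathbf{z}|}\to 0$ for a fixed $B$, one has $|\Psi_\DN(t,\mathbf{x};F_n)-\varphi_\mathbf{a}(t,\mathbf{x})|\leq A_n C(t,\mathbf{x})$, and since $C(t,\mathbf{x})$ depends continuously on $(t,\mathbf{x})$ (again Theorem~\ref{satz_U_continuity}), it is bounded on any compact $K\subseteq(0,\infty)\times\Omega$, so the convergence is uniform on $K$. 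This is essentially the argument already written out for Corollary~\ref{cor_Continuous_dependency}~ii).

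Finally I would check the bookkeeping required by Definition~\ref{defi_Supershift}: the measures $\mu_n$ live on $\overline{B_{k_0}(\mathbf{0})}$ with $k_0>0$, the limit frequency satisfies $|\mathbf{a}|>k_0$ (both inherited verbatim from Definition~\ref{defi_Superoscillations}), and the family $\mathbf{k}\mapsto\varphi_\mathbf{k}(t,\mathbf{x})=\Psi_\DN(t,\mathbf{x};e^{i\mathbf{k}\,\cdot\,})$ is continuous in $\mathbf{k}$ for each fixed $(t,\mathbf{x})$ — this follows once more from Theorem~\ref{satz_U_continuity}, since $\mathbf{k}\mapsto e^{i\mathbf{k}\,\cdot\,}$ is continuous into $\mathcal{A}_1(\mathbb{C}^2)$ (a uniform-on-$\mathbb{C}^2$ estimate $|e^{i\mathbf{kz}}-e^{i\mathbf{k}'\mathbf{z}}|\leq |\mathbf{k}-\mathbf{k}'|\,|\mathbf{z}|\,e^{k_0|\mathbf{z}|}\leq \text{const}\cdot|\mathbf{k}-\mathbf{k}'|\,e^{(k_0+1)|\mathbf{z}|}$ does the job) and $U_\DN(t,\mathbf{x})$ is continuous. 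I expect no single hard obstacle: the only genuinely technical point is the interchange of $U_\DN(t,\mathbf{x})$ with the $\mathbf{k}$-integral in step one, and that is handled exactly by the dominated-convergence estimate already deployed in the proof of Corollary~\ref{cor_Continuous_dependency}~i), now with the extra uniform bound $k_0^{n_1+n_2}$ on the derivatives of the exponential kernel.
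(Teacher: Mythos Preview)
Your proposal is correct and follows essentially the same route as the paper: use Corollary~\ref{cor_Continuous_dependency}~i) to write $\Psi_\DN(t,\mathbf{x};F_n)=U_\DN(t,\mathbf{x})F_n|_{\mathbf{z}=\mathbf{0}}$, justify the interchange of $U_\DN$ with the $\mathbf{k}$-integral via the coefficient bound \eqref{Eq_c_estimate} together with $|\partial_{z_1}^{n_1}\partial_{z_2}^{n_2}e^{i\mathbf{kz}}|\leq k_0^{n_1+n_2}e^{k_0|\mathbf{z}|}$ and finiteness of $|\mu_n|$, and then invoke Corollary~\ref{cor_Continuous_dependency}~ii) for the uniform convergence on compacta. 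Your additional verification that $\mathbf{k}\mapsto\varphi_\mathbf{k}(t,\mathbf{x})$ is continuous (needed for Definition~\ref{defi_Supershift}) is a detail the paper leaves implicit.
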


\begin{proof}
For the first identity in \eqref{Eq_Psi_convergence}, we use the representation of the wave function via the infinite order differential operator in Corollary~\ref{cor_Continuous_dependency}~i). Then we get
\begin{align}
\Psi_\DN(t,\mathbf{x};F_n)&=U_\DN(t,\mathbf{x})\int_{|\mathbf{k}|\leq k_0}e^{i\mathbf{kz}}d\mu_n(\mathbf{k})\Big|_{\mathbf{z}=\mathbf{0}} \notag \\
&=\sum\limits_{n_1,n_2=0}^\infty c_{n_1,n_2}(t,\mathbf{x})\frac{\partial^{n_1+n_2}}{\partial z_1^{n_1}\partial z_2^{n_2}}\int_{|\mathbf{k}|\leq k_0}e^{i\mathbf{kz}}d\mu_n(\mathbf{k})\Big|_{\mathbf{z}=\mathbf{0}} \notag \\
&=\int_{|\mathbf{k}|\leq k_0}\sum\limits_{n_1,n_2=0}^\infty c_{n_1,n_2}(t,\mathbf{x})\frac{\partial^{n_1+n_2}}{\partial z_1^{n_1}\partial z_2^{n_2}}e^{i\mathbf{k}\mathbf{z}}\Big|_{\mathbf{z}=\mathbf{0}}d\mu_n(\mathbf{k}) \notag \\
&=\int_{|\mathbf{k}|\leq k_0}U_\DN(t,\mathbf{x})e^{i\mathbf{kz}}d\mu_n(\mathbf{k})=\int_{|\mathbf{k}|\leq k_0}\Psi_\DN\big(t,\mathbf{x};e^{i\mathbf{k}\,\cdot\,}\big)d\mu_n(\mathbf{k}). \label{Eq_Psi_supershift_representation}
\end{align}
Here, in the third equation we were allowed to interchange the sum and the derivative with the integral because from \eqref{Eq_c_estimate} we conclude the estimate
\begin{align*}
\Big|c_{n_1,n_2}(t,\mathbf{x})\frac{\partial^{n_1+n_2}}{\partial z_1^{n_1}\partial z_2^{n_2}}e^{i\mathbf{kz}}\Big|&=\Big|c_{n_1,n_2}(t,\mathbf{x})(ik_1)^{n_1}(ik_2)^{n_2}e^{i\mathbf{kz}}\Big| \\
&\leq\frac{\pi^2}{2t\Gamma(\frac{n_1+1}{2})\Gamma(\frac{n_2+1}{2})}\Big(\frac{16t}{\sin(2\alpha)}\Big)^{\frac{n_1+n_2+2}{2}}e^{\frac{9r^2}{2t\sin(2\alpha)}}|k_1|^{n_1}|k_2|^{n_2}e^{|\mathbf{kz}|} \\
&\leq\frac{8\pi^2}{\sin(2\alpha)\Gamma(\frac{n_1+1}{2})\Gamma(\frac{n_2+1}{2})}\Big(\frac{4k_0\sqrt{t}}{\sqrt{\sin(2\alpha)}}\Big)^{n_1+n_2}e^{\frac{9r^2}{2t\sin(2\alpha)}}e^{k_0|\mathbf{z}|},
\end{align*}
and hence the sum 
$$
\sum_{n_1,n_2=0}^\infty\big|c_{n_1,n_2}(t,\mathbf{x})\frac{\partial^{n_1+n_2}}{\partial z_1^{n_1}\partial z_2^{n_2}}e^{i\mathbf{kz}}\big|<\infty
$$
 is absolute convergent and interchanging sum and integral is allowed by the dominated convergence theorem and the fact that the measure $\mu_n$ is complex and hence finite.

\medskip

Secondly, the convergence in \eqref{Eq_Psi_convergence} has already been proven in Corollary~\ref{cor_Continuous_dependency}~ii). Since finally the representation \eqref{Eq_Psi_supershift_representation} is exactly the one of the supershift \eqref{Eq_Phin_integral} with the metric space $X=(0,\infty)\times\Omega$ and the functions
\begin{equation*}
\varphi_\mathbf{k}(t,\mathbf{x}):=\Psi_\DN\big(t,\mathbf{x};e^{i\mathbf{k}\,\cdot\,}\big). \qedhere
\end{equation*}
\end{proof}

In the following we consider the special case of superoscillating functions of the form \eqref{Eq_Fn_example} and show that the resulting wave functions admit a supershift.

\begin{cor}
Let $F_n$ be functions of the form
\begin{equation*}
F_n(\mathbf{z})=\sum\limits_{j=0}^nC_j(n)e^{i\mathbf{k_j}(n)\mathbf{z}},\quad\mathbf{z}\in\mathbb{C}^2,
\end{equation*}
with coefficients $C_j(n)\in\mathbb{C}$ and wave vectors $\mathbf{k_j}(n)\in\mathbb{R}^2$ satisfying $|\mathbf{k_j}(n)|\leq 1$. 
If 
$$
\lim_{n\rightarrow\infty}F_n(\mathbf{z})=e^{i\mathbf{az}}
$$
 converges in $\mathcal{A}_1(\mathbb{C}^2)$, for some $\mathbf{a}\in\mathbb{R}^2$ with $|\mathbf{a}|>1$, then also the sequence of solutions $\Psi_\DN(t,\mathbf{x};F_n)$ converge as
\begin{equation}\label{Eq_Psin_example}
\lim\limits_{n\rightarrow\infty}\Psi_\DN(t,\mathbf{x};F_n)=\lim\limits_{n\rightarrow\infty}\sum\limits_{j=0}^nC_j(n)\Psi_\DN\big(t,\mathbf{x};e^{i\mathbf{k_j}(n)\,\cdot\,}\big)=\Psi_\DN\big(t,\mathbf{x};e^{i\mathbf{a}\,\cdot\,}\big),
\end{equation}
uniformly for $(t,\mathbf{x})$ in compact subsets of $(0,\infty)\times\Omega$.
\end{cor}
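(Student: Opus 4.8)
The plan is to derive this Corollary as a direct specialization of the preceding Theorem. The functions $F_n$ of the stated form are exactly the integral representation \eqref{Eq_Fn_integral} with the discrete measures
\begin{equation*}
\mu_n:=\sum_{j=0}^nC_j(n)\,\delta_{\mathbf{k_j}(n)},
\end{equation*}
which are complex Borel measures supported on the closed ball $\overline{B_1(\mathbf{0})}$ since $|\mathbf{k_j}(n)|\leq 1$; thus $k_0=1$. The hypothesis that $F_n\to e^{i\mathbf{az}}$ in $\mathcal{A}_1(\mathbb{C}^2)$ with $|\mathbf{a}|>1=k_0$ is precisely the superoscillation condition \eqref{Eq_Fn_convergence} of Definition~\ref{defi_Superoscillations}. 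Hence the sequence $(F_n)_n$ is superoscillating in the sense required to invoke the previous theorem.

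Applying that theorem then gives at once that $\Psi_\DN(t,\mathbf{x};F_n)$ admits a supershift, and in particular the convergence \eqref{Eq_Psi_convergence} holds uniformly on compact subsets of $(0,\infty)\times\Omega$. It only remains to rewrite the integral $\int_{|\mathbf{k}|\leq k_0}\Psi_\DN(t,\mathbf{x};e^{i\mathbf{k}\,\cdot\,})\,d\mu_n(\mathbf{k})$ explicitly for the discrete measure $\mu_n$: integrating against $\sum_{j=0}^nC_j(n)\delta_{\mathbf{k_j}(n)}$ produces the finite sum $\sum_{j=0}^nC_j(n)\Psi_\DN(t,\mathbf{x};e^{i\mathbf{k_j}(n)\,\cdot\,})$, which is the middle term in \eqref{Eq_Psin_example}. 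The equality of the first and middle terms can alternatively be seen directly from linearity of the operator $U_\DN(t,\mathbf{x})$ together with Corollary~\ref{cor_Continuous_dependency}~i), since $U_\DN(t,\mathbf{x})$ is a (continuous) linear operator and the finite sum can be pulled through it.

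There is no real obstacle here; the only point requiring a line of justification is that the discrete measures $\mu_n$ genuinely satisfy the hypotheses of Definition~\ref{defi_Superoscillations}, i.e. that the assumed $\mathcal{A}_1$-convergence of $F_n$ is exactly condition \eqref{Eq_Fn_convergence}, and that $|\mathbf{a}|>k_0$ holds with $k_0=1$. Both are immediate from the statement. One minor technical remark worth including is that $F_n$ of the given exponential-sum form automatically lies in $\mathcal{A}_1(\mathbb{C}^2)$ (as already noted in the Remark following Definition~\ref{defi_Superoscillations}), so all objects are well defined. With these identifications in place, the Corollary follows by simply quoting the previous theorem and evaluating the integral at the atomic measure.
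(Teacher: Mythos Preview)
Your proposal is correct and matches the paper's approach: the paper states this Corollary without proof, treating it as an immediate specialization of the preceding Theorem, and your argument via the discrete measures $\mu_n=\sum_{j=0}^nC_j(n)\delta_{\mathbf{k_j}(n)}$ is exactly the intended route.
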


For the final part of this paper let us now fix $t>0$ and $\mathbf{x}\in\Omega$ and look at equation \eqref{Eq_Psin_example} in terms of the mapping $\mathbf{k}\mapsto\Psi_\DN(t,\mathbf{x};e^{i\mathbf{k}\,\cdot\,})$. Then, one sees that the value of this mapping at a point $\mathbf{a}$ with $|\mathbf{a}|>1$, which is located outside the unit ball is determined by only values $|\mathbf{k_j}(n)|\leq 1$ inside the unit ball. This property looks very much like analyticity. The following proposition shows that this is indeed the case.

\begin{prop}
For every fixed $t>0$, $\mathbf{x}\in\Omega$, the mapping
\begin{equation*}
\mathbf{k}\mapsto\Psi_\DN\big(t,\mathbf{x};e^{i\mathbf{k}\,\cdot\,}\big)\quad\text{is holomorphic on }\mathbb{C}^2.
\end{equation*}
\end{prop}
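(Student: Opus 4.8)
The plan is to represent $\mathbf{k}\mapsto\Psi_\DN(t,\mathbf{x};e^{i\mathbf{k}\,\cdot\,})$ by the explicit series obtained from the infinite order differential operator and show that this series converges locally uniformly in $\mathbf{k}\in\mathbb{C}^2$ to a holomorphic function. By Corollary~\ref{cor_Continuous_dependency}~i) applied to the entire function $\mathbf{z}\mapsto e^{i\mathbf{kz}}$ (which lies in $\mathcal{A}_1(\mathbb{C}^2)$ for every $\mathbf{k}\in\mathbb{C}^2$), we have
\begin{equation*}
\Psi_\DN\big(t,\mathbf{x};e^{i\mathbf{k}\,\cdot\,}\big)=U_\DN(t,\mathbf{x})e^{i\mathbf{kz}}\Big|_{\mathbf{z}=\mathbf{0}}=\sum\limits_{n_1,n_2=0}^\infty c_{n_1,n_2}(t,\mathbf{x})(ik_1)^{n_1}(ik_2)^{n_2},
\end{equation*}
since $\frac{\partial^{n_1+n_2}}{\partial z_1^{n_1}\partial z_2^{n_2}}e^{i\mathbf{kz}}\big|_{\mathbf{z}=\mathbf{0}}=(ik_1)^{n_1}(ik_2)^{n_2}$. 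Thus the mapping in question is given by an explicit power series in $k_1,k_2$, and it suffices to show this series has infinite radius of convergence (in both variables) and converges locally uniformly.

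The key step is the coefficient bound \eqref{Eq_c_estimate}, which gives
\begin{equation*}
|c_{n_1,n_2}(t,\mathbf{x})|\leq\frac{\pi^2}{2t\,\Gamma(\frac{n_1+1}{2})\Gamma(\frac{n_2+1}{2})}\Big(\frac{16t}{\sin(2\alpha)}\Big)^{\frac{n_1+n_2+2}{2}}e^{\frac{9r^2}{2t\sin(2\alpha)}}.
\end{equation*}
Hence, for $\mathbf{k}$ in a compact set, say $|k_1|,|k_2|\leq R$, the series is dominated termwise by
\begin{equation*}
\frac{8\pi^2}{\sin(2\alpha)}e^{\frac{9r^2}{2t\sin(2\alpha)}}\sum\limits_{n_1,n_2=0}^\infty\frac{1}{\Gamma(\frac{n_1+1}{2})\Gamma(\frac{n_2+1}{2})}\Big(\frac{4R\sqrt{t}}{\sqrt{\sin(2\alpha)}}\Big)^{n_1+n_2}=\frac{8\pi^2}{\sin(2\alpha)}e^{\frac{9r^2}{2t\sin(2\alpha)}}E_{\frac{1}{2},\frac{1}{2}}\Big(\frac{4R\sqrt{t}}{\sqrt{\sin(2\alpha)}}\Big)^2<\infty,
\end{equation*}
exactly as in the proof of Theorem~\ref{satz_U_continuity}, since the factor $1/\Gamma(\frac{n+1}{2})$ decays faster than any geometric term grows (this is the Mittag-Leffler-type summability already exploited there). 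Therefore the power series converges uniformly on compact subsets of $\mathbb{C}^2$.

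Finally, each partial sum is a polynomial in $(k_1,k_2)$, hence entire on $\mathbb{C}^2$; a locally uniform limit of holomorphic functions is holomorphic (Weierstrass), so $\mathbf{k}\mapsto\Psi_\DN(t,\mathbf{x};e^{i\mathbf{k}\,\cdot\,})$ is holomorphic on $\mathbb{C}^2$. No genuine obstacle arises here: the only point requiring care is that the estimate \eqref{Eq_c_estimate} is uniform in $\mathbf{x}$ only on compact subsets of $\Omega$ (through the appearance of $r=|\mathbf{x}|$), but since $t$ and $\mathbf{x}$ are fixed throughout this statement, the bound is a genuine constant and the argument goes through directly. The main work has effectively already been done in Theorem~\ref{satz_U_continuity}; the present proposition is the observation that the same summability also yields joint analyticity in the frequency parameter $\mathbf{k}$.
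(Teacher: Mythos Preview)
Your proposal is correct and follows essentially the same route as the paper: represent $\Psi_\DN(t,\mathbf{x};e^{i\mathbf{k}\,\cdot\,})$ via Corollary~\ref{cor_Continuous_dependency}~i) as the power series $\sum_{n_1,n_2}c_{n_1,n_2}(t,\mathbf{x})(ik_1)^{n_1}(ik_2)^{n_2}$ and conclude holomorphicity from its everywhere convergence. The paper simply asserts that this series is everywhere convergent, whereas you spell out the termwise bound via \eqref{Eq_c_estimate} and the Mittag--Leffler sum, which is a welcome but not new addition.
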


\begin{proof}
Using the representation of the wave function using the infinite order differential operator in Corollary~\ref{cor_Continuous_dependency}~i), gives
\begin{align*}
\Psi_\DN\big(t,\mathbf{x};e^{i\mathbf{k}\,\cdot\,}\big)
&=U_\DN(t,\mathbf{x})e^{i\mathbf{kz}}\Big|_{\mathbf{z}=\mathbf{0}}
\\
&=\sum\limits_{n_1,n_2=0}^\infty c_{n_1,n_2}(t,\mathbf{x})\frac{\partial^{n_1+n_2}}{\partial z_1^{n_1}\partial z_2^{n_2}}e^{i\mathbf{kz}}\Big|_{\mathbf{z}=\mathbf{0}} 
\\
&=\sum\limits_{n_1,n_2=0}^\infty c_{n_1,n_2}(t,\mathbf{x})(ik_1)^{n_1}(ik_2)^{n_2}.
\end{align*}
Since this is an everywhere convergent power series in $\mathbf{k}$, the mapping $\mathbf{k}\mapsto\Psi_\DN(t,\mathbf{x};e^{i\mathbf{k}\,\cdot\,})$ is holomorphic on $\mathbb{C}^2$.
\end{proof}

\bibliographystyle{amsplain}

\end{document}